
\documentclass[1p]{elsarticle}

\usepackage{amssymb}
\usepackage{amsmath}
\usepackage[procnumbered, linesnumbered,algoruled]{algorithm2e}

\usepackage{color}
\usepackage{tikz}
\usetikzlibrary{patterns,decorations.pathreplacing,decorations.markings,math,arrows.meta}
\usepackage{floatrow}

 \newtheorem{theorem}{Theorem} \newtheorem{lemma}[theorem]{Lemma}
 \newtheorem{claim}[theorem]{Claim}
  \newtheorem{definition}[theorem]{Definition} \newdefinition{rmk}{Remark} \newproof{proof}{Proof}

\newcommand{\cC}{{\mathcal{C}}}
\newcommand{\bx}{{\bar{x}}}

\newcommand{\bz}{{\bar{z}}}

\newcommand{\OPT}{\textnormal{OPT}}

\newcommand{\mS}{\mathcal{S}}
\newcommand{\mL}{\mathcal{L}}
\newcommand{\eps}{{\varepsilon}}
\newcommand{\aeps}{{\mu}}
\newcommand{\E}{{\mathbb{E}}}

\DeclareMathOperator*{\argmin}{arg\,min}
\def \II   {{\mathcal I}}
\newcommand{\floor}[1]{{\left\lfloor {#1} \right\rfloor}}

\bibliographystyle{plainurl}

\begin{document}
\title{An Almost Optimal
	Approximation Algorithm for Monotone Submodular Multiple Knapsack
	\tnoteref{prel_version}} 
\tnotetext[prel_version]{
A preliminary version of
this paper appeared in Proc. of the 28th Annual European Symposium on Algorithms (ESA), Pisa (Virtual Conference), September 2020.}


\author[1]{Yaron Fairstein}\ead{yyfairstein@gmail.com}
\author[1]{Ariel Kulik\fnref{fn1}}\ead{kulik@cs.technion.ac.il}
\author[1]{Joseph (Seffi) Naor\fnref{fn2}}\ead{naor@cs.technion.ac.il}

\author[1]{Danny Raz}\ead{danny@cs.technion.ac.il}
\author[1]{Hadas Shachnai}\ead{hadas@cs.technion.ac.il}

\affiliation[1]{organization={Computer Science Department, Technion}, 
	postcode={3200003}, city={Haifa}, country={Israel}}

\fntext[fn2]{This research was supported in part by US-Israel BSF grant 2018352 and by ISF grant 2233/19 (2027511)}
\fntext[fn1]{Corresponding author.}
\begin{abstract}
We study the problem of maximizing a monotone submodular function subject to a Multiple Knapsack constraint. The input is a set $I$ of items, each has a non-negative weight, and a set of bins of arbitrary capacities.
Also, we are given a submodular, monotone and non-negative function $f$ over subsets of the items. The objective is to find a packing of a subset of items  $A \subseteq I$ in the bins such that $f(A)$ is maximized.

Our main result is an almost optimal polynomial time $(1-e^{-1}-\eps)$-approximation algorithm for the problem, for any $\eps>0$. The algorithm relies on 
a structuring technique which converts a general multiple knapsack constraint to a constraint in which the bins are  partitioned into groups  of exponentially increasing cardinalities, each consisting of bins of uniform capacity. We derive the result by combining structuring with  a refined analysis of techniques for submodular optimization subject to knapsack constraints. 
\end{abstract}
\maketitle


\section{Introduction}

Submodular optimization has recently attracted much attention as it provides a unifying framework
capturing many fundamental problems in combinatorial optimization, economics, algorithmic game theory, networking, and
other areas. Furthermore, submodularity also captures many real-world practical applications where economy of scale is prevalent.
Classic examples of submodular functions are coverage functions \cite{feige1998threshold}, matroid rank functions \cite{calinescu2007maximizing} and graph cut functions \cite{feige1995approximating}. A recent survey on submodular functions can be found in \cite{buchbinder2017submodular}.

Submodular functions are defined over sets. Given a ground set 
$I$, a function
$f:2^{I} \rightarrow \mathbb{R}_{\geq 0}$
is called {\em submodular} if
for every $A\subseteq B\subseteq I$ and $i\in I \setminus B$, $f(A \cup\{i\}) - f(A) \geq f(B \cup\{i\}) - f(B)$.\footnote{Equivalently, for every $A,B\subseteq I$: $f(A)+f(B)\geq
	f(A\cup B) + f(A\cap B)$.}
This reflects the diminishing returns property: the marginal value from adding $i\in I$ to a solution diminishes as the solution set becomes larger.
A set function $f:2^I\rightarrow \mathbb{R}$ is {\em monotone} if for any $A\subseteq B \subseteq I$ it holds that $f(A)\leq f(B)$. While in many cases, such as coverage and  matroid rank function, the submodular function is monotone, this is not always the case (cut functions are a classic example).

The focus of this work is  optimization of monotone submodular functions.
In \cite{nemhauser1978best} Nemhauser  and  Wolsey
presented a greedy based $(1-e^{-1})$-approximation for  maximizing a monotone submodular function subject to a cardinality constraint, along with a matching lower bound in the oracle model.
A $(1-e^{-1})$ hardness of approximation bound is also known for the problem under $\textsc{P}\neq \textsc{NP}$, due to the hardness of max-$k$-cover \cite{feige1998threshold} which is a special case. The greedy algorithm of \cite{nemhauser1978best} was later generalized to monotone submodular optimization with a knapsack constraint \cite{khuller1999budgeted,sviridenko2004note}.

A major breakthrough in the field was the continuous greedy algorithm presented in \cite{vondrak2008optimal}. Initially used to derive a $(1-e^{-1})$-approximation for maximizing a monotone submodular function subject to a matroid constraint, the algorithm has become a primary tool in the development of monotone submodular maximization algorithms subject to various other constraints. These include $d$-dimensional knapsack constraints \cite{kulik2013approximations}, and combinations of $d$-dimensional knapsack and matroid constraints \cite{chekuri2010dependent}. A variant of the continuous greedy algorithm for non-monotone functions is given in \cite{feldman2011unified}.

In the {\em multiple knapsack problem} (MKP) we are given a set of items, where each item has a weight and a profit, and a set of bins of arbitrary capacities. The objective is to find a packing of a subset of the items that respects the bin capacities and yields a maximum profit. The problem is one of the most natural extensions of the classic Knapsack problem arising also in the context of Virtual Machine (VM) allocation in cloud computing.  The practical task is to assign VMs to physical machines such that capacity constraints are satisfied, while maximizing the profit of the cloud provider. A submodular cost function allows cloud providers to offer complex cost models to high-volume customers, where the price customers pay for each VM can depend on the overall number of machines used by the customer.

A {\em polynomial time approximation scheme} for MKP was first presented by Chekuri and Khanna~\cite{chekuri2005polynomial}. The authors also ruled out the existence of a {\em fully polynomial time approximation scheme} for the problem. An
{\em efficient} polynomial time approximation scheme was later developed by Jansen~\cite{jansen2010parameterized,jansen2012fast}.

\subsection{Our Results}

In this paper we consider the {\em submodular multiple knapsack problem} (SMKP). The input consists of a set of  $n$ items $I$ and
$m$ bins $B$. Each item $i\in I$ is associated with a weight $w_i\geq 0$, and each bin $b \in B$ has a capacity $W_b\geq 0$. We are also given an oracle to a non-negative monotone submodular function $f:2^I \rightarrow \mathbb{R}_{\geq 0}$.
A feasible solution to the problem is a tuple of $m$ subsets $(A_b)_{b\in B}$ such that for every $b\in B$ it holds that $\sum_{i\in A_b} w_i \leq W_b$. The value of a solution $(A_b)_{b\in B} $ is
$f\left(\bigcup_{b\in B} A_b\right)$. The goal is to find a feasible solution of maximum value.\footnote{We note that the set of bins $B$ is part of the input for SMKP, thus the number of bins is non-constant. This is one  difference between SMKP and the problem of maximizing a submodular set function subject
to $d$ knapsack constraints (or, a $d$-dimensional knapsack constraint) where $d$ is fixed (for more details see, e.g.,~\cite{kulik2013approximations}).}

The problem is a natural generalization of both Multiple Knapsack~\cite{chekuri2005polynomial} (where $f$ is modular or linear), and the problem of monotone submodular maximization subject to a knapsack constraint \cite{sviridenko2004note} (where $m=1$).
Our main result is stated in the next theorem.
\begin{theorem}\label{thm:main_result}
	For any $\eps>0$, there is a randomized $(1-e^{-1}-\eps)$-approximation algorithm for SMKP.
\end{theorem}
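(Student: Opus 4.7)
My plan is to prove the theorem in three stages: (i) a structuring reduction that turns an arbitrary SMKP instance into an instance with highly regular bin structure, (ii) a continuous--greedy computation of a fractional assignment, and (iii) a randomized rounding that exploits the structure to obtain an integral solution with only $O(\eps)\cdot\OPT$ loss. Each stage is designed to lose at most an $O(\eps)$ fraction of $\OPT$, and concatenating them yields the claimed approximation ratio.

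In the first stage I would round every bin capacity down to the nearest power of $(1+\eps)$, so that bins fall into at most $O(\log_{1+\eps}(W_{\max}/W_{\min}))$ \emph{capacity classes}. I would then aggregate classes and drop a negligible fraction of bins until the surviving classes $G_1,\dots,G_k$ have cardinalities satisfying $|G_{j+1}|/|G_j|\geq \gamma$ for some $\gamma=\gamma(\eps)>1$. This enforces exponential growth, so $k=O_{\eps}(\log m)$, while the dropped bins together host items of total marginal $f$-value at most $O(\eps)\cdot\OPT$ by a standard averaging/shifting argument that exploits monotonicity and submodularity. The output is a structured instance whose optimum is at least $(1-O(\eps))\cdot\OPT$ and whose bins are organized into few groups of identical capacity with geometrically increasing cardinalities.

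In the second stage I would set up the polytope $P$ of fractional assignments $x\in[0,1]^{I\times B}$ satisfying $\sum_{b\in B}x_{i,b}\leq 1$ for every item $i$ and $\sum_{i\in I}w_i x_{i,b}\leq W_b$ for every bin $b$, and apply a variant of the continuous greedy algorithm to the multilinear extension of $f$ composed with the marginal $x_i=\sum_b x_{i,b}$. Standard analysis produces a fractional solution $x^\star\in P$ whose extension value is at least $(1-e^{-1}-\eps)\cdot\OPT$. Before rounding, I would \emph{guess}, by enumeration, the items that the optimum places in each of the $O_\eps(1)$ smallest groups (which contain few bins of the largest capacity) together with the heavy items (of weight exceeding $\eps$ times the group's bin capacity) assigned to each remaining group; the number of guesses is polynomial for fixed $\eps$, and this reduces the rounding problem to one involving only \emph{light} items in moderate-to-large groups.

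In the third stage I would round group by group, applying randomized swap rounding or pipage rounding to the restriction of $x^\star$ to $G_j$. Since each light item has weight at most $\eps$ times the group's bin capacity, and groups are now large enough for sharp concentration (this is where the geometric growth of $|G_j|$ becomes essential), capacity overflows are small with high probability; a clean-up phase that discards the overflowing items in each bin restores feasibility at an $O(\eps)$-fraction loss in the multilinear extension, by monotonicity and submodularity. Combining the three losses yields an integral solution of value at least $(1-e^{-1}-O(\eps))\cdot\OPT$; rescaling $\eps$ by a constant proves the theorem. The main obstacle, in my view, is the rounding step: one must simultaneously preserve $(1-e^{-1}-O(\eps))$ of the fractional value, respect the cross-group assignment constraint $\sum_b x_{i,b}\le 1$, and control capacity violations in groups whose cardinality is too small for strong concentration. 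The exponential-cardinality structure is what resolves this tension, since the small groups are few enough to be absorbed by enumeration of their heavy items while the large groups admit sharp concentration for randomized rounding.
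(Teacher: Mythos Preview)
Your three-stage plan matches the paper's architecture, and the structuring intuition (few groups with geometrically growing cardinalities; small groups handled by enumeration, large groups by concentration) is correct. The genuine gap is in how you handle heavy items in the large groups. You propose to ``guess\ldots the heavy items assigned to each remaining group'', but a group $G_j$ can contain up to $|G_j|/\eps$ heavy items and $|G_j|$ is unbounded, so this enumeration is not polynomial. (A similar issue afflicts ``guess all items in the $O_\eps(1)$ smallest groups'': a single large-capacity bin can hold arbitrarily many light items.) Without that guess your polytope carries one knapsack constraint \emph{per bin}, and neither swap rounding nor independent sampling gives useful concentration on a single bin's load when only $O(1/\eps)$ items are fractionally assigned to it; the geometric growth of $|G_j|$ helps only if the capacity constraint is aggregated over the whole block, and then after rounding you are left with a bin-packing subproblem for the heavy items.

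The device your proposal is missing is the paper's \emph{configuration} reformulation. A $j$-configuration is a set of $j$-heavy items fitting into one bin of $G_j$; each has at most $1/\eps$ items, so there are at most $n^{1/\eps}$ of them. Continuous greedy is run not on $[0,1]^{I\times B}$ but on a ground set of (configuration, block) and (small-item singleton, block) pairs, with just two linear constraints per block: number of selected configurations $\leq |G_j|$ and total weight $\leq |G_j|W_j^*$. After scaling by $(1-\aeps)/(1+\aeps)$ and independent sampling, Chernoff concentrates both aggregate quantities whenever $|G_j|$ is large, and a First-Fit-style packing (Algorithm~3 in the paper) converts the sampled configurations and small items into a feasible per-bin assignment with no further loss. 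The $O_\eps(1)$ single-bin blocks are handled not by guessing their full contents but by enumerating globally the $\xi=O_\eps(1)$ items of highest marginal value (Lemma~2) and then treating those bins as $\delta$-restricted (only items of weight at most $\delta W_b$ may enter), so that their lone knapsack constraint also concentrates under independent rounding.
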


As mentioned above, a $(1-e^{-1})$ hardness of approximation bound is known for the problem under $\textsc{P}\neq \textsc{NP}$, due to the hardness of max-$k$-cover \cite{feige1998threshold} which is a special case of SMKP.
This is a vast improvement over previous results.
Feldman presented in~\cite{feldman2013maximization}  a $\left( \frac{e-1}{3e-1} -o(1)\right)\approx 0.24$-approximation for the special case of identical bin capacities, along with a $\frac{1}{9}$-approximation for general capacities.
To the best of our knowledge, this is the best known approximation ratio for the problem.\footnote{Sun et. al. \cite{sun2020tight} indicate that a $\left(1-e^{1-e^{-1}} -o(1)\right)\approx 0.468$-approximation for the problem can be derived using the techniques of \cite{calinescu2011maximizing}. We note that this derivation is non-trivial (no details were given in~\cite{calinescu2011maximizing}).}

 Simultaneously and independently to our work, Sun et. al. \cite{sun2020tight} presented a {\em deterministic} greedy based $(1-e^{-1}-\eps)$-approximation for the special case of identical bins.
In a later version~\cite{sun2020tight_v3},
which appeared after the publication of the preliminary version of this paper, Sun et. al. derived a randomized $(1-e^{-1}-\eps)$-approximation for general SMKP instances,  
matching our result,
by using a different approach.

\subsection{Tools and Techniques} \label{sec:tools_techniques}

Our algorithm relies on a refined analysis of techniques for submodular optimization subject to
$d$-dimensional knapsack constraints~\cite{kulik2013approximations, calinescu2011maximizing,chekuri2010dependent},
combined with sophisticated application of tools
used in the development of approximation schemes for packing problems~\cite{de1981bin}.

At the heart of our algorithm lies the observation that SMKP for a large number of identical bins (i.e., $\forall b\in B,~W_b=W$ for some $W\geq0$) can be easily approximated via a reduction
to the problem of maximizing a submodular function subject to a
$2$-dimensional knapsack constraint (see, e.g., \cite{kulik2013approximations}). Given such an SMKP instance and $\eps>0$, we partition the items to {\em small} and {\em large}, where an item $i\in I$ is small if $w_i \leq \eps W$ and large otherwise. We further define a {\em configuration} to be a subset of large items which fits into a single bin, and let $\cC$ be the set of all configurations. It follows that for fixed $\eps>0$, the number of configurations is polynomial.

Using the above we define
a new submodular optimization problem, to which we refer as the {\em block-constraint} problem. We define a new universe $E$ which consists of all configurations $\cC$ and all small items, $E=\cC\cup \{\{i\}|\textnormal{~$i$ is small}\}$. We also define a new submodular function $g:2^E\rightarrow \mathbb{R}_{\geq 0}$ by
$g(T) = f\left( \bigcup_{A\in T} A\right)$. Now, we seek a subset of elements $T\subseteq E$ such that $T$ has at most $m=|B|$ configurations, i.e., $|T\cap \cC| \leq m$,
and the total weight of sets selected is at most $m\cdot W$; namely, $\sum_{A\in T} w(A)\leq m \cdot W$, where $w(A) = \sum_{i\in A} w_i$.

It is easy to see that the optimal value of the block-constraint problem is at least the value of the optimum for the original instance. Moreover, a solution $T$ for the block-constraint problem can be used to generate a solution for the SMKP instance with only a small loss in value. As there are no more than $m$ configurations, and all other items are small, the items in $T$ can be easily packed into $(1+\eps)m+1$ bins of capacity $W$ using First Fit. Then, it is possible to remove  $\eps m +1$ of the bins while maintaining at least $\frac{m}{	(1+\eps) m +1} \geq \frac{1}{1+2\eps}$ of the solution value, for $m\geq \frac{1}{\eps}$. Once these $\eps m +1$ bins are removed, we have a feasible solution for the SMKP instance.
The block-constraint problem can be viewed as
monotone submodular optimization subject to a $2$-dimensional knapsack constraint. Thus, a  $(1-e^{-1}-\eps)$-approximate solution can be found efficiently \cite{kulik2013approximations}.

Our approximation algorithm for SMKP is based on a generalization of the above. We refer to a set of bins of identical capacity as a {\em block}, and show how to reduce an SMKP instance into a  submodular optimization problem with a $d$-dimensional knapsack constraint,  in which $d$ is twice the number of blocks plus a constant. While, generally, this problem cannot be solved for non-constant $d$, we use a refined analysis of known algorithms~\cite{kulik2013approximations,chekuri2010dependent} to show that the problem can be efficiently solved if the blocks admit a certain structure, to which we refer as {\em leveled}.

We utilize  a grouping technique, inspired by the work of  Fernandez de la Vega and Lueker \cite{de1981bin}, to convert a general SMKP instance to a leveled instance.
We sort the bins in decreasing order by capacity and then partition them into levels, where level $t$, $t\geq 0$, has $N^{2+t}$ bins, divided into $N^2$ consecutive blocks, each containing $N^t$ bins. We decrease the capacity of each bin to the smallest capacity of a bin in the same block. While the decrease in capacity generates the leveled structure required for our algorithm to work, it only slightly decreases the optimal solution value. The main idea is that given an optimal solution, each block of decreased capacity can now be used to store the items assigned to the subsequent block  on the same level. Also, the items assigned to $N$ blocks from each level can be evicted, while only causing a reduction of $\frac{1}{N}$ to the profit (as only $N$ of the $N^2$ blocks of the level are evicted). These evicted blocks are then used for the items assigned to the first block in the next level.

\section{Preliminaries}

\label{sec:prelim}

Our analysis utilizes several basic properties of submodular functions.
Given a  monotone submodular function $f:2^I \rightarrow \mathbb{R}_{\geq 0}$ and  a set $S\subseteq I$, we define $f_S:2^I \rightarrow \mathbb{R}_{\geq 0}$ by $f_S(A) = f(S\cup A )- f(S)$. It follows that $f_S$ is a monotone, non-negative submodular function (see Claim \ref{submodular-residual} in \ref{app:submodular}). The proof of the next claim is given in \ref{app:submodular}.

\begin{claim}\label{submodular-extension}
	Let $f:2^I\rightarrow\mathbb{R}_{\geq 0}$ be a non-negative, monotone and submodular function, and let
	$E \subseteq 2^I \times X$
	for some set $X$ (each element of $E$ is a pair $(S,h)$ with $S\subseteq I$ and $h\in X$). Then the function $g:2^{E}\rightarrow\mathbb{R}_{\geq 0}$ defined by $g(A) =f\left(\cup_{(S,h)\in A} S\right)$ is non-negative, monotone and submodular.
\end{claim}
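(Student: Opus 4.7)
The plan is to verify the three required properties of $g$ in turn, with submodularity being the only part requiring more than a one-line argument.

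First, I would handle non-negativity and monotonicity. Non-negativity of $g(A)$ is immediate from non-negativity of $f$. For monotonicity, if $A \subseteq B \subseteq E$, then $\bigcup_{(S,h)\in A} S \subseteq \bigcup_{(S,h)\in B} S$, and monotonicity of $f$ gives $g(A) \leq g(B)$.

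The main step is submodularity. I would use the equivalent characterization stated in the footnote of the excerpt: $f(A')+f(B') \geq f(A'\cup B') + f(A'\cap B')$ for all $A',B' \subseteq I$. Given any $A, B \subseteq E$, let $U_A = \bigcup_{(S,h)\in A} S$ and $U_B = \bigcup_{(S,h)\in B} S$. Then $U_{A\cup B} = U_A \cup U_B$ and $U_{A\cap B} \subseteq U_A \cap U_B$. Applying the set-form submodularity of $f$ to $U_A$ and $U_B$ gives
\[
g(A) + g(B) = f(U_A) + f(U_B) \geq f(U_A \cup U_B) + f(U_A \cap U_B) \geq f(U_{A\cup B}) + f(U_{A\cap B}),
\]
where the second inequality uses monotonicity of $f$ together with the inclusion $U_{A\cap B} \subseteq U_A \cap U_B$ (and the equality $U_A \cup U_B = U_{A \cup B}$). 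The right-hand side equals $g(A\cup B) + g(A\cap B)$, yielding submodularity of $g$.

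I do not anticipate a real obstacle: the only subtlety is that the element $e = (S,h)\in E$ contributes an entire subset $S$ of $I$ rather than a single ground-set element, so the proof must lean on the set-form of submodularity (or, equivalently, an inductive argument over the elements of $S$) rather than the marginal-gain form. Using the set inequality directly avoids this technicality and keeps the proof short.
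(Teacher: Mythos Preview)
Your proof is correct. The paper's argument is essentially the same in spirit but is phrased via the diminishing-returns characterization rather than the lattice inequality: for $A\subseteq B\subseteq E$ and $(S,h)\in E\setminus B$ it compares the marginal gain $g(A\cup\{(S,h)\})-g(A)$ with $g(B\cup\{(S,h)\})-g(B)$, invoking an auxiliary claim (their Claim~\ref{monotone-submodular}) that for monotone submodular $f$ one has $f(T_1\cup S)-f(T_1)\geq f(T_2\cup S)-f(T_2)$ whenever $T_1\subseteq T_2$. Your route via $U_{A\cap B}\subseteq U_A\cap U_B$ together with monotonicity of $f$ reaches the same conclusion in one step and avoids the auxiliary claim; the paper's route makes the role of the single added element explicit. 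Both arguments rely on monotonicity of $f$ in exactly the same place (to handle the fact that taking unions does not commute with intersection), which is consistent with your closing remark and with the paper's own observation that the claim fails for non-monotone $f$.
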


While Claim~\ref{submodular-extension} is essential for our algorithm, it is important to emphasize it does not hold for non-monotone submodular functions.

Many modern submodular optimization algorithms rely on the submodular {\em Multilinear Extension} (see, e.g., \cite{calinescu2007maximizing,kulik2013approximations,lee2010maximizing,vondrak2013symmetry,feldman2011unified,buchbinder2014submodular}). Given a function $f:2^I\rightarrow\mathbb{R}_{\geq 0}$, its multilinear extension is $F:[0,1]^I\rightarrow\mathbb{R}_{\geq 0}$ defined as:
$$ F(\bx) = \sum_{S\subseteq I}f(S)\prod_{i\in S}\bx_i\prod_{i\in I\setminus S}(1-\bx_i).$$

The multilinear extension can be interpreted
as an expectation of a random variable. Given $\bx\in [0,1]^I$ we say that a random set $X$ is distributed according to $\bx$, $X\sim\bx$, if $\Pr(i\in X)=\bx_i$ and the events $(i\in X)_{i\in I}$ are independent. It follows that $F(\bx)=\E_{X\sim \bx}[f(X)]$.

The unified greedy algorithm of~\cite{feldman2011unified} can be used to find approximate solution for maximization problems of the form
  $\max F(\bx) ~~\text{s.t.}~~ \bx\in P$, where $F$ is the multilinear extension of a monotone submodular function $f$, and $P$ is a down-monotone polytope. The algorithm uses two oracles, one for $f$ and another which given $\bar{\lambda}\in \mathbb{R}^I$ returns
a vector $\bar{x}\in P$ such that $\bar{x}\cdot \bar{\lambda}$ is maximal.
The algorithm returns
$\bx\in P$ such that $F(\bx)\geq \left(1-e^{-1}-o(1)\right) \max_{\bar{y} \in P\cap \{0,1\}^I} F(\bar{y})$. The result can also obtained via the continuous greedy of \cite{calinescu2011maximizing}.

We use $\II=(I,w,B,W,f)$ to denote an SMKP instance consisting of a set of items $I$ with weights $w_i$ for $i\in I$, a set of bins $B$ with capacities $W_b$ for $b\in B$, and objective function $f$. Given a set $A\subseteq I$, let $w(A)= \sum_{i\in A} w_i$. We denote by $\OPT(\II)$ the optimal solution value for the instance $\II$.

\section{The Approximation Algorithm}

In this section we present our approximation algorithm for SMKP. Given an instance ${\II}$ of the problem, let
$A^* = \cup_{b \in B} A^*_b$ be an optimal solution of value $OPT({\II})$. We first observe that there exists a constant size subset $A = \cup_{b \in B} A_b$, where  $A_b \subseteq A^*_b$, satisfying the following property:
the value gained from any item in $i \in A^* \setminus A$ is small relative to $OPT({\II})$. Thus, our algorithm initially enumerates over all possible partial assignments of constant size. Each assignment is then extended to an approximate solution for ${\II}$. Among all possible partial assignments and the respective extensions the algorithm returns the best solution. Thus, from now on we restrict our attention to finding a solution for the {\em residual} problem, obtained by fixing the initial partial assignment.

Formally, given an SMKP instance, ${\II}=(I,w,B,W,f)$, a feasible partial solution $(A_b)_{b\in B}$ and $\xi \in \mathbb{N}$, we define the {\em residual} instance ${\II}'=(I',w,B,W',f')$ with respect to $(A_b)_{b\in B}$ and $\xi$  as follows.
Let $A=\cup_{b\in B} A_b$
and set $I'= \left\{ i \in I\setminus A ~ \middle|~ f_A (\{i\})\leq \frac{f(A)}{\xi} \right\}$. The weights of the items remain the same and so is the  set of bins. 
For every $b\in B$ we set $W'_b = W_b-w(A_b)$.
Finally, the objective function of the residual instance is $f'=f_A$.
%
\begin{lemma}
	\label{lem:residual}
	Let $\II$ be an SMKP instance,  $\xi \in \mathbb{N}$, and $(A^*_b)_{b\in B}$ an optimal solution for ${\II}$ such that $A^*_{b_1} \cap A^*_{b_2}=\emptyset$ for any $b_1, b_2 \in B$, $b_1 \neq b_2$. If $\sum_{b\in B} |A^*_b| \geq \xi$ there is a feasible solution $(A_b)_{b\in B}$ for $\II$ such that $A_b\subseteq A^*_b$ for any $b\in B$,
	$\sum_{b\in B} |A_b|= \xi$, and
	$(A^*_b\setminus A_b)_{b\in B}$ is a feasible solution for the residual instance of $\II'$ w.r.t $(A_b)_{b\in B}$ and $\xi$.
\end{lemma}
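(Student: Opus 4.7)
The plan is to build the desired set $(A_b)_{b\in B}$ by a simple greedy process on the items of $A^*=\bigcup_{b\in B}A^*_b$. Since the $A^*_b$'s are pairwise disjoint, every $i\in A^*$ belongs to a unique bin, which I denote $b(i)$. Initialize $A_b=\emptyset$ for all $b$. For $t=1,\ldots,\xi$, let $A^{(t-1)}=\bigcup_{b\in B}A_b$ at the start of iteration $t$, pick
\[
i_t\in\argmax\bigl\{\,f_{A^{(t-1)}}(\{i\})\;\big|\;i\in A^*\setminus A^{(t-1)}\,\bigr\},
\]
and add $i_t$ to $A_{b(i_t)}$. The hypothesis $\sum_{b}|A^*_b|\geq \xi$ guarantees the argmax is taken over a non-empty set in every iteration, so the process runs $\xi$ steps and produces $A=A^{(\xi)}$ with $|A|=\xi$ and $A_b\subseteq A^*_b$ for every $b\in B$.

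Next I verify feasibility. Clearly $(A_b)_{b\in B}$ is feasible for $\II$ since $A_b\subseteq A^*_b$ and $(A^*_b)_{b\in B}$ is feasible. For the residual instance, fix any $b\in B$ and any $i\in A^*_b\setminus A_b$. Because the $A^*_b$'s are pairwise disjoint and $A_{b'}\subseteq A^*_{b'}$, the item $i$ lies in no $A_{b'}$ for $b'\neq b$, and by construction $i\notin A_b$; hence $i\notin A$. For the capacity constraint, $w(A^*_b\setminus A_b)=w(A^*_b)-w(A_b)\leq W_b-w(A_b)=W'_b$, exactly as required.

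The nontrivial step — and what I consider the key point of the lemma — is the marginal bound $f_A(\{i\})\leq f(A)/\xi$ for every $i\in A^*\setminus A$, which is needed to place $i$ in $I'$. Fix such $i$. Since $i$ was available at every iteration $t\leq\xi$, the greedy choice gives $f_{A^{(t-1)}}(\{i_t\})\geq f_{A^{(t-1)}}(\{i\})$, and by submodularity (using $A^{(t-1)}\subseteq A$) we have $f_{A^{(t-1)}}(\{i\})\geq f_A(\{i\})$. Summing over $t=1,\ldots,\xi$ and telescoping,
\[
\xi\cdot f_A(\{i\})\;\leq\;\sum_{t=1}^{\xi}f_{A^{(t-1)}}(\{i_t\})\;=\;f(A)-f(\emptyset)\;\leq\;f(A),
\]
where the last inequality uses non-negativity of $f$. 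Dividing by $\xi$ gives the desired bound, completing the proof. The only real obstacle is setting up this telescoping/greedy argument cleanly; once done, disjointness of the $A^*_b$'s handles everything else mechanically.
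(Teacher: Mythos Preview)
Your proof is correct and follows essentially the same approach as the paper: greedily select the $\xi$ items of $A^*$ with highest successive marginal values, then use submodularity plus telescoping to bound the marginal of any leftover item by $f(A)/\xi$. The paper phrases the construction as an ordering $a_1,\ldots,a_r$ of $A^*$ with $A_b=A^*_b\cap\{a_1,\ldots,a_\xi\}$, but this is exactly your greedy process, and the marginal-bound argument is identical.
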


\begin{proof}
	Let   $(A^*_b)_{b\in B}$ be an optimal solution for the SMKP instance.  Define $A^*=\cup_{b\in B} A^*_b$ and order the items of $A^*$ by their {\em marginal values}.
	That is, $A^* = \{a_1,\ldots, a_r\}$ where $f_{T_{\ell-1}}(\{a_\ell\}) = \max_{a\in A^*\setminus T_{\ell-1}} f_{T_{\ell-1}}(\{a\})$ with $T_{\ell} =\{a_1,\ldots, a_{\ell}\}$   for every $1\leq \ell \leq r$ (also, $T_0=\emptyset$). Define $(A_b)_{b \in B}$ by $A_b= A^*_b \cap \{a_1,\ldots, a_{\xi}\}$ for every $b\in B$ and $A=\cup_{b\in B} A_b$. We therefore have $A=\{a_1,\ldots , a_{\xi}\}$.
	
	For any $b\in B$, it holds that $w(A_b) \leq w(A^*_b) \leq W_b$, and thus $(A_b)_{b \in B}$  is a feasible solution for $\II$. Furthermore, for any $b \in B$ it holds that $A_b \subseteq A^*_b$ by definition. As the sets $(A^*_b)_{b\in B}$ are disjoint it follows that $\sum_{b\in B} |A_b| = \xi$.
	
	Let $\II'=(I',w,B,W',f')$ be the residual instance of $\II$ w.r.t $(A_b)_{b \in B}$  and $\xi$.  It remains to show that  $(A^*_b\setminus A_b)_{b\in B}$ is a feasible solution for $\II'$.
	For every $\xi<i \leq r$ and $1\leq \ell \leq \xi$ it holds that $f_A(\{a_{i}\}) \leq f_{T_{\ell -1}}(\{a_{i}\})\leq f_{T_{\ell-1}} (\{a_{\ell}\})$  where the first inequality follows from the  submodularity of $f$ and the second by the definition of $a_{\ell}$. Combining the last inequality with $f' = f_A$ we obtain,
	\[\xi \cdot f'(\{a_i\})=\xi \cdot f_A(\{a_{i}\}) \leq \sum_{\ell = 1}^{\xi} f_{T_{\ell -1}} (\{a_{\ell}\})=f(A)-f(\emptyset) \leq f(A).\] Thus, $a_{i}\in I'$, implying that $A^*_b\setminus A_b \subseteq I'$ for any $b\in B$. Furthermore, for any $b\in B$,
	\[w(A^*_b \setminus A_b) = w(A^*_b) -w(A_b) \leq W_b-w(A_b) =W'_b.\]
	It follows that  $(A_b^*\setminus A_b )_{b\in B}$ is a solution for the residual instance. \qed
\end{proof}

Next, we observe that instances of SMKP are easier to solve when the number of distinct bin capacities is small (e.g., uniform bin capacities), leading us to consider {\em bin blocks}:
\begin{definition}
	For a given instance of SMKP we say that a subset of bins $\tilde{B}\subseteq B$ is a {\em block} if all the bins in $\tilde{B}$ have the same capacity, i.e., for bins $b_1$ and $b_2$ belonging to the same block it holds that $W_{b_1}= W_{b_2}$.
\end{definition}

Following an enumeration over partial assignments, our algorithm reduces the number of blocks by altering the bin capacities.
To this end,  we use a specific structure that we call {\it leveled}, defined as follows.
\begin{definition}
	For any $N\in \mathbb{N}$, we say that a partition $(B_j)_{j=0}^k$ of
	a set $B$ of bins with capacities $(W_b)_{b\in B}$
	is {\em $N$-leveled} if $B_j$ is a block, and  $|B_j|= N^{\floor{\frac{j}{N^2}}}$ for all $0\leq j \leq k$.
\end{definition}

By the above definition, we can view each set of consecutive blocks of the same size as a {\em level}.
For $0 \leq j \leq k$, block $j$ belongs to level $\ell = \lfloor  \frac{j}{N^2}\rfloor$. 
Thus, for level $\ell>0$ the number of bins in each block of level $\ell$ is $N$ times the number of bins in each block of level $\ell -1$.

In Section \ref{sec:structuring} we give Algorithm \ref{alg:structuring} which generates an $N$-leveled partition of the bins, ${\tilde B}= \cup_{j=0}^k {\tilde B}_j$ with the capacities of the bins $(W_b)_{b\in B}$ modified to $(\tilde{W}_b)_{b\in \tilde{B}}$.
We show that solving the problem with these new bin capacities may cause only a small harm to the optimal solution value. In particular, we prove (in Section \ref{sec:structuring}) the following.

\begin{lemma}
	\label{lem:structuring}
	Algorithm \ref{alg:structuring} is a polynomial time algorithm which given 
	 $N\in \mathbb{N}$,  a set of bins $B$ and capacities $(W_b)_{b\in B} $, returns  a subset of bins $\tilde{B}\subseteq B$,  capacities $(\tilde{W}_b)_{b\in B} $,
	and an $N$-leveled partition  $(\tilde{B}_j)_{j=0}^k$ of $\tilde{B}$,
	such that
	\begin{enumerate}
		\item
		The bin capacities satisfy $\tilde{W}_b \leq W_b$, for every $b\in \tilde{B}$.
		\item
		\label{prop:struct_third}
		For any set of items $I$, weights $(w_i)_{i\in I}$, a  submodular non-negative function $f:2^I \rightarrow \mathbb{R}_{\geq 0}$, and a feasible assignment $(S_b)_{b\in B}$ for the instance $(I,w,B,W,f)$, there exists  a feasible assignment  $(\tilde{S}_b)_{b\in \tilde{B}}$ for the instance $(I,w,\tilde{B}, \tilde{W}, f)$ such that $f\left(\bigcup_{b\in \tilde{B}} \tilde{S}_b\right) \geq \left(1- \frac{1}{N}\right) f\left(\bigcup_{b\in B} S_b \right)$ and $\bigcup_{b\in \tilde{B}} \tilde{S}_b \subseteq \bigcup_{b\in B} S_b$.  
	\end{enumerate}
	We refer to $\tilde{B}$ and $\tilde{W}$ as the $N$-leveled constraint of $B$ and $W$. 
\end{lemma}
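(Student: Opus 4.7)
The plan is to prove the lemma by first specifying Algorithm~\ref{alg:structuring} and then verifying the two properties in turn. The algorithm sorts the bins of $B$ in non-increasing order of capacity, walks down the sorted list, and greedily assembles blocks $\tilde B_0, \tilde B_1, \ldots$ with $|\tilde B_j| = N^{\floor{j/N^2}}$, continuing as long as enough bins remain to complete the next prescribed block; any leftover bins are not included in $\tilde B$. For each block $\tilde B_j$, set $\tilde W_b := \min_{b' \in \tilde B_j} W_{b'}$ for every $b \in \tilde B_j$. This runs in polynomial time, Property~1 is immediate, and $(\tilde B_j)_{j=0}^k$ is $N$-leveled by construction.

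For Property~2, my first step is to establish the capacity-inheritance property: because the bins are sorted in decreasing order of capacity, for every $b \in \tilde B_j$ the modified capacity $\tilde W_b$ is at least the \emph{original} capacity of any bin in any later block $\tilde B_{j'}$. Consequently, the items originally assigned to a single bin of $\tilde B_{j+1}$ still fit into any bin of $\tilde B_j$ under the modified capacities, and since consecutive blocks of the same level contain equally many bins, the transfer is bin-for-bin. Note in particular that all blocks of level $0$ are singletons, so $\tilde W$ and $W$ coincide inside level~$0$.

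Given a feasible $(S_b)_{b \in B}$, I would construct $(\tilde S_b)_{b \in \tilde B}$ via a shift-and-reserve scheme controlled by a single index $i^* \in \{0, 1, \ldots, N-1\}$ shared across all levels. For each level $\ell$, partition its $N^2$ blocks into $N$ groups $G_{\ell, 0}, \ldots, G_{\ell, N-1}$ of $N$ consecutive blocks each and declare $G_{\ell, i^*}$ the reserved group of level $\ell$. In every level $\ell \ge 1$, shift the original contents of block $p+1$ into block $p$ for $p = 0, \ldots, N^2 - 2$, which is feasible by capacity inheritance; the items displaced out of block $0$ of level $\ell$ are then placed into the $N$ reserved bins of level $\ell - 1$, whose $N \cdot N^{\ell-1} = N^\ell$ bins match the footprint of block $0$ of level $\ell$. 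The reserved bins of each level first have their previous contents overwritten by the incoming displaced items, which is where the evictions occur: under the choice $i^*$, the evicted items of level $0$ are those originally in $G_{0, i^*}$, and the evicted items of level $\ell \ge 1$ are those originally at positions $\{p+1 : p \in G_{\ell, i^*}\}$.

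To bound $f(\bigcup_b \tilde S_b) \ge (1 - 1/N) f(\bigcup_b S_b)$, write $A^* = \bigcup_b S_b$, let $E_{i^*} \subseteq A^*$ denote the items evicted under the choice $i^*$, and let $E' \subseteq A^*$ denote the items of $A^*$ that are never evicted for any $i^*$ (namely, the items at block $0$ of each level $\ell \ge 1$ together with any items of the topmost level unaffected by the scheme). Then $\{E_0, E_1, \ldots, E_{N-1}, E'\}$ is a partition of $A^*$ into $N+1$ parts. Applying the submodular partition inequality $\sum_{j=1}^M f(A \setminus P_j) \ge (M-1) f(A)$ valid for any partition $(P_1, \ldots, P_M)$ of $A$ (proved by listing $A$ in the order $P_1, P_2, \ldots$ and telescoping marginals using submodularity) yields $f(A^* \setminus E') + \sum_{i^*=0}^{N-1} f(A^* \setminus E_{i^*}) \ge N f(A^*)$. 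Since $f(A^* \setminus E') \le f(A^*)$ by monotonicity, averaging gives $\max_{i^*} f(A^* \setminus E_{i^*}) \ge (1 - 1/N) f(A^*)$, so some $i^*$ produces the desired packing. The main technical obstacle will be verifying feasibility of the scheme for every $i^*$: the shift must be routed coherently around the reserved groups wherever they fall within a level, and the topmost level (whose reserved blocks have no successor to feed) has to be handled as a boundary case.
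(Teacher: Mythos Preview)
Your approach is essentially the same as the paper's: partition each level into $N$ super-blocks of $N$ consecutive blocks, sacrifice one super-block per level (chosen by an averaging argument over a submodular inequality), shift block contents leftward within each level, and use the sacrificed super-block of level $\ell-1$ to absorb the displaced first block of level $\ell$. The paper organizes this as three explicit phases (Eviction, Shuffling, Shifting) and always routes the absorbed block into the \emph{last} super-block after first swapping it with the chosen super-block $r^*$; you instead absorb directly into group $i^*$, which is a cosmetic difference. (As a side effect, your evicted index set at levels $\ell\ge 1$ is shifted by one relative to level~$0$, whereas the paper's is uniform across levels; both are fine.)

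There is one genuine gap. The lemma is stated for a non-negative submodular $f$, not a monotone one, so your step ``$f(A^* \setminus E') \le f(A^*)$ by monotonicity'' is unjustified. The fix is exactly what the paper does: rather than applying the $(N{+}1)$-part inequality to $f$, apply your $N$-part inequality to $h(S) = f(S \cup E')$ with the partition $E_0, \ldots, E_{N-1}$ of $A^* \setminus E'$. This gives
\[
\sum_{i^*=0}^{N-1} f(A^* \setminus E_{i^*}) \;\ge\; (N-1)\, f(A^*) + f(E') \;\ge\; (N-1)\, f(A^*),
\]
using only $f \ge 0$, and averaging yields the desired bound. With this correction your argument goes through; the ``routing'' concern you flag is in fact a non-issue, since the uniform shift followed by overwriting the reserved group is already well-defined and feasible for every $i^*$ by your capacity-inheritance observation.
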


Once the instance is $N$-leveled, we proceed to solve the problem (fractionally) and apply randomized rounding  to obtain an integral solution (see Section~\ref{sec:rounding}).
Algorithm~\ref{alg:rounding} utilizes efficiently the leveled structure of the instance.  Instead of having a separate constraint for each bin in a block $-$
to bound the total size of the items packed in this bin $-$
we use only two constraints for each block. The first constraint is a knapsack constraint referring to the total capacity of a block, and the second constraint restricts the number of {\em configurations} assigned to the block.\footnote{We defined a configuration in Section \ref{sec:tools_techniques}.}
Thus, the number of constraints significantly decreases if the blocks are large. Since leveled instances also have a constant number of blocks consisting of a {\em single} bin, those are handled separately via the notion of $\delta$-restricted SMKP.
	
Given $\delta>0$, the input for $\delta$-{\em restricted SMKP} includes the same parameters as an input for SMKP, and also a subset $B^r \subseteq B$ of
{\em restricted} bins. A solution for $\delta$-restricted SMKP is a feasible assignment $(A_b)_{b\in B}$
satisfying also the property that $\forall b \in B^r$ the items assigned to $b$
are relatively small; namely,
for any  $b\in B^r$ and $i \in A_b$ it holds that  $w_i \leq \delta W_b$.

Given the $N$-leveled instance of our problem,
we turn the blocks of a single bin (that is, blocks $\tilde{B}_j$ such that $|\tilde{B}_j|=1)$ to be {\em restricted}.
We note that while items of  weight greater than $\delta W_b$ may be assigned to these blocks in some optimal solution, the overall number of such items is bounded by a constant. Indeed, our initial enumeration guarantees that evicting these items from an optimal
solution may cause only small harm to the optimal solution value, allowing us to consider the instance as $\delta$-restricted.

In Section~\ref{sec:rounding} we show the following bound on the performance guarantee of Algorithm~\ref{alg:rounding}, which uses randomized rounding. The algorithm is parameterized by $\aeps\in (0,0.1)$ (to be determined).
Suppose we are given a $\delta$-restricted SMKP instance $\II$, such that
the unrestricted bins are partitioned into blocks, i.e., $B\setminus B^r = B_1 \cup \ldots \cup B_k$, and
\begin{equation}
\label{eq:def_upsilon}
\upsilon =\max_{i\in I } f(\{i\}) - f(\emptyset).
\end{equation}
\begin{lemma}
	\label{lem:rounding}
	For $\aeps\in (0,0.1)$,
	Algorithm \ref{alg:rounding} returns a feasible solution $(S_b)_{b\in B}$ such that \\
	$\E\left[ f(\cup_{b\in B} S_b)\right] \geq (1-e^{-1})\frac{(1-\aeps)^3}{1+\aeps} (1-\gamma) \OPT(\II)$,
	where
	$$\gamma = \exp\left(-\frac{\aeps^3 }{16 }\cdot\frac{\OPT(\II)}{\upsilon} \right)+ |B^r| \exp\left(-\frac{\aeps^2}{12} \cdot \frac{1}{\delta}\right)  + 2 \cdot \sum_{j=1}^{k} \exp\left(-\frac{\aeps^2}{12} |B_j| \right).
	$$
\end{lemma}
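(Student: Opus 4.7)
The plan is to reduce the $\delta$-restricted instance to monotone submodular maximization subject to a $(2k+|B^r|)$-dimensional packing constraint, solve the continuous relaxation, and round. Define a ground set $E$ consisting of one element $(S,j)$ for every block $B_j\subseteq B\setminus B^r$ and every configuration $S\subseteq I$ with $w(S)\leq W_{B_j}$, one element $(i,j)$ for every small item $i$ and every block $B_j$, and one element $(i,b)$ for every $b\in B^r$ and every item $i$ with $w_i\leq\delta W_b$. By Claim~\ref{submodular-extension}, the lifted function $g(T)=f\bigl(\bigcup_{(S,\cdot)\in T}S\bigr)$ is non-negative, monotone, and submodular on $2^E$. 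Let $P\subseteq[0,1]^E$ be the down-monotone polytope defined by a configuration-count constraint $\sum_S x_{(S,j)}\leq|B_j|$ and an aggregate weight constraint $\leq |B_j|W_{B_j}$ for every block $B_j$, together with a knapsack constraint $\leq W_b$ for every $b\in B^r$. Any feasible integral SMKP solution embeds in $P$ with $g$-value equal to its SMKP value, so $\max_{y\in P\cap\{0,1\}^E} g(y)\geq\OPT(\II)$.

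Next, invoke the unified continuous greedy of~\cite{feldman2011unified} on $(G,P)$, where $G$ is the multilinear extension of $g$, to obtain $\bx\in P$ with $G(\bx)\geq(1-e^{-1})(1-\aeps)\OPT(\II)$, absorbing the $o(1)$ slack of the greedy into $(1-\aeps)$ by choosing its internal accuracy parameter. Scale to $\bz=(1-\aeps)\bx$; concavity of $G$ along rays from the origin, a standard consequence of monotone submodularity, gives $G(\bz)\geq(1-\aeps)G(\bx)$. Sample $X\subseteq E$ by including each element independently with probability $\bz_e$. Convert $X$ to a per-bin assignment: chosen configurations occupy distinct bins of their block; items chosen for a restricted bin go there; small items chosen for a block are packed into the leftover slack of its bins by First-Fit, with any overflow evicted. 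If any block cardinality, block capacity, or restricted-bin capacity constraint is violated, output the empty assignment, guaranteeing feasibility.

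The expected-value bound follows by a union bound over failure events. For each block $B_j$, two multiplicative Chernoff bounds---one on the configuration count, one on the total assigned weight, each with expectation at most $(1-\aeps)$ times its bound---give $\exp(-\aeps^2|B_j|/12)$ apiece, contributing the $2\sum_j\exp(-\aeps^2|B_j|/12)$ summand of $\gamma$. For each $b\in B^r$, item weights are at most $\delta W_b$ and expected total weight is at most $(1-\aeps)W_b$, yielding $\exp(-\aeps^2/(12\delta))$ per bin via weighted Chernoff. For the objective, standard submodular concentration $\Pr[g(X)\leq(1-\aeps)G(\bz)]\leq\exp(-\aeps^2 G(\bz)/(2\upsilon))$, combined with the crude lower bound $G(\bz)\geq\aeps\OPT(\II)/8$ (valid for $\aeps\leq 0.1$ since $G(\bz)$ is a constant fraction of $\OPT(\II)$), produces $\exp(-\aeps^3\OPT(\II)/(16\upsilon))$. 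On the success event the First-Fit packing loses at most a $1/(1+\aeps)$ factor, so the output value satisfies $g(\text{output})\geq(1-\aeps)G(\bz)/(1+\aeps)\geq (1-e^{-1})(1-\aeps)^3\OPT(\II)/(1+\aeps)$. Averaging with success probability $\geq 1-\gamma$ proves the lemma.

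The hardest step is orchestrating the three $(1-\aeps)$ factors (from the continuous greedy guarantee, from concavity scaling $G(\bz)\geq(1-\aeps)G(\bx)$, and from Chernoff concentration of $g(X)$) together with the $(1+\aeps)^{-1}$ loss from the First-Fit packing of small items into blocks. Extracting the exponent $\aeps^3/16$ rather than the more natural $\aeps^2$ from submodular concentration requires deliberately substituting the very loose bound $G(\bz)\geq\aeps\OPT(\II)/8$ into the Chernoff exponent, a weakening chosen so that the final failure probability depends only on $\OPT(\II)/\upsilon$ and on $\aeps$. A secondary technicality is showing that the First-Fit step actually retains a $1/(1+\aeps)$ fraction of $g(X)$, which uses submodularity of $g$ together with $w_i\leq\aeps W_{B_j}$ for small items and the per-block capacity bound ensured on the success event.
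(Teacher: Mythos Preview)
Your overall plan matches the paper's: construct the block-constraint instance $(E,P,g)$, run continuous greedy, scale down, sample, and convert via a First-Fit--type procedure (the paper's Algorithm~\ref{alg:convert_blockconst_SMKP}). Your allocation of the three $(1-\aeps)$ factors and the single $(1+\aeps)^{-1}$ differs slightly from the paper, which scales by $\frac{1-\aeps}{1+\aeps}$ rather than $1-\aeps$ so that Chernoff lands the sampled set inside $(1-\aeps)P$ and the conversion (Lemma~\ref{lem:block_constraints_prop}, part~\ref{block_const_prop2}) is \emph{lossless}; you instead scale by $1-\aeps$, land in $P$, and attribute the $(1+\aeps)^{-1}$ to a lossy First-Fit. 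Either accounting can be made to work, though the paper's is cleaner and avoids the submodular-averaging argument you allude to for the overflow.

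There is, however, a genuine gap in your handling of the submodular concentration step. You invoke the bound $\Pr[g(X)\leq(1-\aeps)G(\bz)]\leq\exp\bigl(-\aeps^2 G(\bz)/(2\upsilon)\bigr)$ with $\upsilon=\max_i f(\{i\})-f(\emptyset)$, but the concentration lemma (Lemma~\ref{lem:extension_cher}) requires the maximum singleton marginal \emph{of $g$}, not of $f$. Since elements of $E$ include configurations containing up to $\aeps^{-1}$ large items, one has $g(\{(S,j)\})-g(\emptyset)=f(S)-f(\emptyset)\leq |S|\cdot\upsilon\leq \aeps^{-1}\upsilon$. The correct bound is therefore $\exp\bigl(-\aeps^2 G(\bz)/(2\aeps^{-1}\upsilon)\bigr)=\exp\bigl(-\aeps^3 G(\bz)/(2\upsilon)\bigr)$, and the $\aeps^3$ in the exponent arises \emph{here}, not from the artificial weakening $G(\bz)\geq \aeps\,\OPT(\II)/8$ that you introduce. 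In fact $G(\bz)$ is a fixed constant fraction of $\OPT(\II)$ (at least $(1-e^{-1})(1-\aeps)^2\geq 1/8$ for $\aeps\leq 0.1$), which is exactly what the paper uses to reach $\exp\bigl(-\aeps^3\OPT(\II)/(16\upsilon)\bigr)$. Your final numerical bound happens to match, but the reasoning is wrong: the ``deliberate weakening'' you describe is both unnecessary and not the source of the extra factor of $\aeps$.
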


Algorithm~\ref{alg:smkp} gives the pseudocode of our approximation algorithm for general SMKP instances.
The algorithm uses several configuration parameters that will be set in the proof of Lemma~\ref{lem:approx}.

\begin{algorithm}[h]
	\SetAlgoLined
	\SetKwInOut{Input}{Input}\SetKwInOut{Output}{output}
	\Input{An SMKP instance $\II =(I,w,B,W,f)$ and the parameters $N,\xi, \delta$ and $\mu$.}
	\DontPrintSemicolon

	\ForAll{feasible assignments $A=(A_b)_{b\in B}$ such that $\sum_{b\in B} |A_b|\leq \xi$}{
		
		\label{line:enum}
		
		Let $\II'=(I',w,B,W',f')$ be the residual instance of $\II$ w.r.t $(A_b)_{b\in B}$ and $\xi$. \;	
		
		Run Algorithm \ref{alg:structuring} with  the bins $B$ and capacities $(W'_b)_{b\in B}$.
		Let  $\tilde{B}$ and $(\tilde{W}_b)_{b\in \tilde{B}}$  be the output, and $\tilde{B} = \cup_{j=0}^{k} \tilde{B}_j$ the partition of $\tilde{B}$ to leveled blocks. Let $\tilde{\II}=(I',w,\tilde{B}, \tilde{W}, f')$ be the resulting instance.
		\label{line:structuring}
		\;
		
		Let $\tilde{\II}_R$ be the $\delta$-restricted SMKP instance of $\tilde{\II}$ with the restricted bins  $\tilde{B}^r = \cup_{j=0}^{\min\{N^2-1, k\}} \tilde{B}_j$.
		\label{line:restricted}
		\;
		
		Solve $\tilde{\II}_R$ using Algorithm \ref{alg:rounding} with parameter $\aeps$, and the partition $\tilde{B}\setminus \tilde{B}^r =\cup_{j=N^2}^{k}\tilde{B}_j $.  Denote the returned assignment by $(\tilde{S}_b)_{b\in \tilde{B}}$, and let $S_b = \tilde{S}_b$ for $b\in \tilde{B}$ and $S_b = \emptyset$ for $b\in B \setminus \tilde{B}$.\;
		
		If $f(\cup_{b\in B} (A_b \cup S_b))$ is higher than the value of the current best solution, set $(A_b\cup S_b)_{b\in B}$   as the current best solution.
		\label{line:solution}
		\;
	}
	Return the best solution found.
	\;
	\caption{Algorithm for SMKP}
	\label{alg:smkp}
\end{algorithm}

\begin{lemma}
	\label{lem:approx}
	For any $\eps>0$,
	there are parameters $N,\xi,\delta, \aeps$ such that,
	for any SMKP instance $\II$,
	Algorithm \ref{alg:smkp}
	returns a solution of expected value at least $(1-e^{-1}-\eps)\OPT(\II)$.
\end{lemma}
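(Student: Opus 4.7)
The plan is to identify, for any SMKP instance $\II$, a specific iteration of the outer loop of Algorithm~\ref{alg:smkp} whose output has expected value at least $(1-e^{-1}-\eps)\OPT(\II)$. Fix an optimal solution $(A^*_b)_{b\in B}$ with pairwise disjoint sets $A^*_b$ (if not, reassign duplicates without loss of feasibility or value). If $\sum_{b\in B}|A^*_b|\leq \xi$, then $(A^*_b)_{b\in B}$ is itself one of the partial assignments enumerated in line~\ref{line:enum}, and the algorithm already returns at least $\OPT(\II)$. Otherwise, we apply Lemma~\ref{lem:residual} to obtain a feasible partial assignment $(A_b)_{b\in B}$ of total cardinality $\xi$ such that, setting $A = \bigcup_b A_b$, the tuple $(A^*_b\setminus A_b)_{b\in B}$ is feasible for the residual instance $\II'$. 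This specific $A$ is enumerated by the algorithm, so we focus on that iteration. The construction also implies $\OPT(\II') \geq f(A^*)-f(A) = \OPT(\II)-f(A)$ and, because $I' = \{i : f_A(\{i\}) \leq f(A)/\xi\}$, that $\upsilon \leq f(A)/\xi$ in $\II'$.

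We then chain three reductions. Lemma~\ref{lem:structuring} produces an $N$-leveled instance $\tilde{\II}$ with $\OPT(\tilde{\II}) \geq (1-1/N)\OPT(\II')$. Passing to the $\delta$-restricted version $\tilde{\II}_R$ costs at most the value of items, in some optimum of $\tilde{\II}$, that are assigned to a restricted bin $b\in \tilde{B}^r$ with weight exceeding $\delta \tilde{W}_b$; each restricted bin contains at most $1/\delta$ such items and $|\tilde{B}^r|\leq N^2$, so at most $N^2/\delta$ items are discarded, each with $f'$-singleton value at most $f(A)/\xi$. By submodularity and monotonicity this gives $\OPT(\tilde{\II}_R) \geq \OPT(\tilde{\II}) - N^2 f(A)/(\delta\xi)$. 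Finally, Lemma~\ref{lem:rounding} applied to $\tilde{\II}_R$ with parameter $\aeps$ returns an assignment of expected $f'$-value at least $\alpha \cdot \OPT(\tilde{\II}_R)$, where $\alpha = (1-e^{-1})\frac{(1-\aeps)^3}{1+\aeps}(1-\gamma)$.

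Combining the three inequalities via $f\bigl(\bigcup_b (A_b\cup S_b)\bigr) = f(A)+f'\bigl(\bigcup_b S_b\bigr)$ and collecting coefficients of $f(A)$ and $\OPT(\II)$, the expected value produced in this iteration is at least
\[
f(A)\left[1 - \alpha\left(1-\tfrac{1}{N}\right) - \frac{\alpha N^2}{\delta\xi}\right] + \alpha\left(1-\tfrac{1}{N}\right)\OPT(\II).
\]
Choosing $\xi$ large relative to $N^2/\delta$ and using $\alpha \leq 1-e^{-1}$ renders the bracket non-negative, so the bound simplifies to $\alpha(1-1/N)\OPT(\II)$, and it remains to make $\alpha(1-1/N) \geq 1-e^{-1}-\eps$. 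To that end we intend to pick $\aeps = \Theta(\eps)$ (so $(1-\aeps)^3/(1+\aeps) \geq 1-O(\eps)$), $N = \Theta(\log(1/\eps)/\aeps^2)$ (so both $1/N$ and the geometric sum $\sum_j \exp(-\aeps^2|\tilde{B}_j|/12)$ are $O(\eps)$, using $|\tilde{B}_j|\geq N$ for $j\geq N^2$), $\delta = \Theta(\aeps^2/\log(N^2/\eps))$ (so the $|\tilde{B}^r|\exp(-\aeps^2/(12\delta))$ term is $O(\eps)$), and $\xi$ a sufficiently large polynomial in $1/\eps$ and $\log(1/\eps)$, satisfying both $\xi \geq 16\ln(1/\eps)/\aeps^3$ (to drive $\exp(-\aeps^3\OPT(\tilde{\II}_R)/(16\upsilon))$ small via $\upsilon\leq f(A)/\xi$) and $\xi \gg N^2/\delta$.

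The main obstacle is the coupled parameter balancing. Taking $N$ large to push $\alpha(1-1/N)$ towards $1-e^{-1}$ simultaneously makes $\tilde{B}^r$ carry up to $N^2$ bins, inflating both the restricted-bin penalty $N^2f(A)/(\delta\xi)$ and the $|\tilde{B}^r|$ factor in $\gamma$; this forces $\delta$ small and $\xi$ correspondingly large. Verifying that all of $\aeps,N,\delta,\xi$ remain constants (for fixed $\eps$), so that line~\ref{line:enum} enumerates only polynomially many partial assignments, while jointly driving every loss term below $\eps$, is the bulk of the technical work. A minor edge case worth checking is $f(A)=0$: then $\upsilon=0$, the first term of $\gamma$ vanishes identically, and either $\OPT(\II)=0$ or the bound collapses to $\alpha(1-1/N)\OPT(\II)$ directly.
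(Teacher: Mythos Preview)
Your overall chain of reductions matches the paper's, but there is a genuine gap in the step ``it remains to make $\alpha(1-1/N)\geq 1-e^{-1}-\eps$'' and specifically in your plan to ``drive $\exp(-\aeps^3\OPT(\tilde{\II}_R)/(16\upsilon))$ small via $\upsilon\le f(A)/\xi$.'' From $\upsilon\le f(A)/\xi$ you only get
\[
\frac{\OPT(\tilde{\II}_R)}{\upsilon}\;\ge\;\xi\cdot\frac{\OPT(\tilde{\II}_R)}{f(A)},
\]
and no choice of $\xi$ alone bounds this, because $\OPT(\tilde{\II}_R)/f(A)$ is instance-dependent and can be arbitrarily small (take $f(A)$ close to $\OPT(\II)$, so that $f'(A^*\setminus A)$ and hence $\OPT(\tilde{\II}_R)$ are tiny). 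In that regime your $\gamma$ is essentially $1$, $\alpha$ collapses to $0$, and the simplified bound $\alpha(1-1/N)\OPT(\II)$ gives nothing. Your displayed expression with the $f(A)$-term would in fact rescue you in that regime (since then $f(A)$ is nearly $\OPT(\II)$), but you discarded it; and you cannot simultaneously treat $\alpha$ as a parameter-only constant and drop the $f(A)$ term.

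The paper closes this gap with a short case split you are missing: if $f(A)\ge(1-e^{-1})\OPT(\II)$ the iteration is already good enough; otherwise $f'(A^*\setminus A)=\OPT(\II)-f(A)\ge \frac{1}{e-1}f(A)$, and combining this with your own inequalities $\OPT(\tilde{\II}_R)\ge(1-1/N)f'(A^*\setminus A)-\frac{N^2}{\delta\xi}f(A)$ yields a uniform lower bound $\OPT(\tilde{\II}_R)\ge f(A)/5$ (for small $\eps$ and $\xi\ge N^2/(\eps^2\delta)$). This gives $\OPT(\tilde{\II}_R)/\upsilon\ge \xi/5$, and now a constant choice of $\xi$ (depending only on $\aeps,\eps$) really does make the first term of $\gamma$ at most $\eps^2/2$. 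With that one extra step your argument goes through essentially as in the paper.
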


\begin{proof}
	
	We start by setting the parameter values. The reason for selecting these values will become clear later.
	Given a fixed $\eps\in (0,0.1)$, there is $\aeps\in (0,0.1)$ such that $\frac{(1-\aeps)^3}{1+\aeps} \geq (1-\eps^2)$.
	By the Monotone Convergence Theorem,
	$$ \lim_{N\rightarrow \infty}2 N^2 \cdot \sum_{t=1}^{\infty} \exp\left(-\frac{\aeps^2\cdot N^t}{12}\right) =   \sum_{t=1}^{\infty}
	\lim_{N\rightarrow \infty}2 N^2 \exp\left(-\frac{\aeps^2\cdot N^{t}}{12} \right) =0.$$
	It follows that there are $N>\frac{1}{\eps^2}$ and $\delta>0$ such that
	\begin{equation}
	\label{eq:Ndelta}
	N^2 \exp\left(-\frac{\aeps^2}{12} \cdot \frac{1}{\delta}\right)  + 2 N^2 \cdot \sum_{t=1}^{\infty} \exp\left(-\frac{\aeps^2}{12} N^t \right) <\frac{\eps^2}{2}.
	\end{equation}
	Finally, we select $\xi$ such that  $\xi\geq \frac{N^2}{\eps^2 \delta}$ and $\exp\left( -\frac{\mu^3}{16}  \cdot \frac{\xi}{5}\right)\leq \frac{\eps^2}{2}$.
	
	Let $\II=(I,w,B,W,f)$ be an SMKP instance, and let $(A^*_b)_{b\in B}$ be an optimal solution for ${\II}$. Assume w.l.o.g that $A^*_{b_1} \cap A^*_{b_2} =\emptyset$ for any $b_1, b_2 \in B$, $b_1\neq b_2$.
	Define $A^* = \cup_{b\in B} A^*_b$. If $|A^*|\leq \xi$, there is an iteration of Line \ref{line:enum} in which $A^*_b =A_b$ for all $b\in B$. Therefore, in this iteration we have at Line \ref{line:solution} $f(\cup_{b\in B} (A_b \cup S_b))\geq f(A^*)$, and the algorithm returns a solution of value at least $f(A^*)$.  Otherwise, by Lemma \ref{lem:residual}, there is a feasible solution $(A_b)_{b\in B}$ such that $A_b\subseteq A^*_b$, $\sum_{b\in B} |A_b| =\xi$ and $(A^*_b\setminus A_b)_{b\in B}$ is a feasible solution for $\II'$, the residual instance of $\II$ w.r.t $(A_b)_{b\in B}$ and $\xi$. It follows that there is an iteration of Line \ref{line:enum} which considers this solution $(A_b)_{b\in B}$. We focus on this iteration for the rest of the analysis.
	
	Let $A=\cup_{b\in B} A_b$. If $f(A)\geq (1-e^{-1}) f(A^*)$ then when the algorithm reaches Line \ref{line:solution} it holds that $f(\cup_{b\in B} (A_b \cup S_b))\geq f(A)\geq (1-e^{-1})f(A^*)$; therefore, the algorithm returns a $(1-e^{-1})$-approximation in this case.
	Henceforth, we can assume that $f(A)\leq (1-e^{-1})f(A^*)$. Then,
	\begin{equation*}
	f'(\cup_{b\in B}(A^*_b\setminus A_b)) =
	f'(A^*\setminus A) =
	f(A^*) -f(A)
	\geq \frac{f(A)}{1-e^{-1} }-f(A) =
	\frac{1}{e-1} f(A).
	\end{equation*}

	Since $(A^*_b\setminus A_b)_{b\in B}$ is a feasible solution for $\II'$, and by Lemma \ref{lem:structuring}, it holds that
	\begin{equation}
	\label{eq:tII_bound}
	\OPT(\tilde{\II} )\geq \left( 1-\frac{1}{N}\right)f'(A^*\setminus A) \geq (1-\eps^2) f'(A^*\setminus A),
	\end{equation}
	where $\tilde{\II}$ is the instance defined in Step \ref{line:structuring}. The last inequality follows from the definition of $N$. 
	Let $(D_b)_{b\in \tilde{B}}$ be an optimal solution for $\tilde{\II}$.
	Consider $(D^r_b)_{b\in \tilde{B}}$ where
	$D^r_b = D_b\setminus \{ i\in D_b | w_i >\delta\cdot \tilde{W}_b \}$ for $b\in \tilde{B}^r$  (the set $\tilde{B}^r$ is defined in Line \ref{line:restricted}) and
	$D^r_b = D_b$ for $b\in \tilde{B} \setminus \tilde{B}^r$. It follows that $D^r_b$ is a solution for the $\delta$-restricted SMKP instance $\tilde{\II}_R$.
	As for any $b\in \tilde{B}^r$ it holds that $|\{i\in D_b | w_i >\delta \cdot \tilde{W}_b\} | \leq \frac{1}{\delta}$, we have
	\begin{equation}
	\label{eq:I_R_lb}
	\begin{aligned}
	\OPT\left(\tilde{\II}_R\right)&\geq f'\left(\cup_{b \in \tilde{B}} D^r_b\right)
	\geq \OPT\left(\tilde{\II}\right)-
	\frac{N^2}{\delta\cdot \xi } f(A) \geq (1-\eps^2) f'\left(A^*\setminus A\right) - \eps^2\cdot f(A).
	\end{aligned}
	\end{equation}
	The second inequality follows from the definition of residual instance, and the third inequality from \eqref{eq:tII_bound} and the choice of $\xi$.
	Since $f'(A^*\setminus A)\geq  \frac{1}{e-1}f(A)$ and $\eps \in (0,0.1)$, it follows that $\OPT(\tilde{\II}_R) \geq \frac{f(A)}{5}$.
	
	Let $\upsilon = \max_{i\in I'} f'(\{i\})$. By Lemma \ref{lem:rounding}, we have that
	\begin{equation}
	\label{eq:E_f_S}
	\E\left[f'(\cup_{b\in \tilde{B}} \tilde{S}_b) \right] \geq (1-e^{-1})\frac{(1-\aeps)^3}{1+\aeps} (1-\gamma) \OPT(\tilde{\II}_R) \geq (1-e^{-1})(1-\eps^2)(1-\gamma)\OPT(\tilde{\II}_R),
	\end{equation}
	where
	\begin{eqnarray}
	\nonumber
	\gamma &= 
	\exp\left(-\frac{\aeps^3 }{16 }\cdot\frac{\OPT(\tilde{\II}_{R})}{\upsilon} \right)+ |\tilde{B}^r| \exp\left(-\frac{\aeps^2}{12} \cdot \frac{1}{\delta}\right)  + 2 \cdot \sum_{j=N^2}^{k} \exp\left(-\frac{\aeps^2}{12} |\tilde{B}_j| \right)\\
	&
	\leq\exp\left(-\frac{\aeps^3 }{16 }\cdot\frac{\OPT(\tilde{\II}_R)}{\xi^{-1} f(A)} \right)+ |\tilde{B}^r| \exp\left(-\frac{\aeps^2}{12} \cdot \frac{1}{\delta}\right)  + 2 \cdot \sum_{j=N^2}^{k} \exp\left(-\frac{\aeps^2}{12} |\tilde{B}_j| \right) \nonumber \\
	\label{eq:bound_mu}
	&\leq \exp\left(-\frac{\aeps^3 }{16 }\cdot\frac{\xi}{5} \right)+ N^2 \exp\left(-\frac{\aeps^2}{12} \cdot \frac{1}{\delta}\right)  + 2\cdot N^2 \cdot \sum_{t=1}^{\infty} \exp\left(-\frac{\aeps^2}{12} N^t \right)\leq  \eps^2.
	\end{eqnarray}
	
	The first inequality uses $\upsilon = \max_{i\in I'} f'(\{i\})\leq \xi^{-1} f(A)$  (by the definition of $\II'$).
	The second inequality holds since $\OPT(\tilde{\II}_R)\geq \frac{f(A)}{5}$ , $|\tilde{B}^r|\leq N^2$ 
	and there are at most $N^2$ blocks $\tilde{B}_j$ of size $N^t$. The last inequality uses 
	\eqref{eq:Ndelta} and the choice of $\xi$.  Combining \eqref{eq:bound_mu} with \eqref{eq:E_f_S} and \eqref{eq:I_R_lb}, we obtain
	\begin{equation*}
	\begin{aligned}
	&\E\left[f(\cup_{b\in B} (A_b\cup S_b)) \right]
	\geq
	f(A)+\E\left[f'(\cup_{b\in \tilde{B}} \tilde{S}_b) \right]  \geq f(A)+ (1-e^{-1})(1-\eps^2)^2\OPT(\tilde{\II}_R) \\
	\geq &f(A)+ (1-e^{-1})(1-\eps^2)^3 f'(A^*\setminus A)-\eps^2 f(A)\geq (1-e^{-1}-\eps) f(A^*).
	\end{aligned}
	\end{equation*}
	
	Hence, in this iteration the solution considered in Line \ref{line:solution} has expected value at least $(1-e^{-1}-\eps) f(A^*)$. This completes the proof of the lemma. \qed
	\end{proof}

\begin{lemma}
	\label{lem:correct}
	For any constant parameters $N$, $\xi$, $\delta$ and $\aeps$,  Algorithm \ref{alg:smkp} returns a feasible solution for the input instance in polynomial time.
\end{lemma}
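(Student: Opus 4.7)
The plan is to establish the two distinct claims separately: polynomial running time and feasibility of the returned solution.

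For the running time, the main step is to bound the number of iterations of the outer loop at Line \ref{line:enum}. A feasible partial assignment $(A_b)_{b\in B}$ with $\sum_{b\in B} |A_b|\leq \xi$ can be specified by choosing at most $\xi$ item-bin pairs from $I\times B$, so the number of such assignments is at most $(|I|\cdot |B|+1)^{\xi}$, which is polynomial in the input size because $\xi$ is constant. Inside each iteration, the construction of the residual instance $\II'$ is straightforward and polynomial; Algorithm \ref{alg:structuring} runs in polynomial time by Lemma \ref{lem:structuring}; the construction of $\tilde{\II}_R$ merely designates the restricted bin set $\tilde{B}^r$; and Algorithm \ref{alg:rounding} is polynomial (this is implicit in Lemma \ref{lem:rounding}, and will be verified in Section \ref{sec:rounding}). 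Finally, evaluating $f$ on the candidate solution requires a single value-oracle query. Since $N$, $\xi$, $\delta$, and $\aeps$ are all constant, the overall running time is polynomial.

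For feasibility, the plan is to verify bin by bin that the returned assignment $(A_b\cup S_b)_{b\in B}$ respects the capacity $W_b$ of each bin $b$. For bins $b\in B\setminus \tilde{B}$ we have $S_b=\emptyset$, so the statement reduces to $w(A_b)\leq W_b$, which holds because $(A_b)_{b\in B}$ is enumerated as a feasible assignment of $\II$. For bins $b\in \tilde{B}$, the key observation is that $(\tilde{S}_b)_{b\in \tilde{B}}$ is a feasible solution of the $\delta$-restricted instance $\tilde{\II}_R$, and feasibility there is the same as feasibility for $\tilde{\II}$, so $w(\tilde{S}_b)\leq \tilde{W}_b$. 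By Lemma \ref{lem:structuring} applied to the bins $B$ with capacities $(W'_b)_{b\in B}$, we have $\tilde{W}_b \leq W'_b = W_b - w(A_b)$. Moreover, $\tilde{S}_b \subseteq I'$ and by the definition of the residual instance $I'\cap A=\emptyset$, so $A_b$ and $\tilde{S}_b$ are disjoint. Combining these facts yields $w(A_b\cup S_b) = w(A_b) + w(\tilde{S}_b) \leq w(A_b) + \tilde{W}_b \leq W_b$, as required.

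The main subtle point that needs care is precisely the last chain of inequalities: one must track that Algorithm \ref{alg:structuring} is invoked on the \emph{residual} capacities $W'_b$ rather than on the original $W_b$, so that Lemma \ref{lem:structuring} delivers the correct bound $\tilde{W}_b\leq W_b - w(A_b)$, and that the items returned by Algorithm \ref{alg:rounding} lie in $I'$ and are therefore disjoint from $A$. Once this bookkeeping is in place, feasibility follows directly and the lemma is established.
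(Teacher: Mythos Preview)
Your proposal is correct and follows essentially the same approach as the paper: bound the enumeration by $(|I|\cdot|B|)^{\xi}$, invoke the polynomial running times of Algorithms~\ref{alg:structuring} and~\ref{alg:rounding}, and for feasibility combine $w(\tilde{S}_b)\leq \tilde{W}_b\leq W'_b$ from Lemmas~\ref{lem:rounding} and~\ref{lem:structuring} with $W'_b=W_b-w(A_b)$. Your explicit verification that $A_b$ and $\tilde{S}_b$ are disjoint (via $\tilde{S}_b\subseteq I'$ and $I'\cap A=\emptyset$) is slightly more careful than the paper, which simply uses $w(A_b\cup S_b)\leq w(A_b)+w(S_b)$; either way the argument goes through.
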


\begin{proof}
	We first note that for any fixed parameter values the algorithm has a polynomial running time. The number of assignments considered in Line \ref{line:enum} can be trivially bounded by $(n\cdot m )^\xi$.
	As Algorithms \ref{alg:structuring}  and \ref{alg:rounding} are polynomial in their input size, the operations in each iteration are also done in polynomial time.
	
	For each iteration of Line \ref{line:enum}, by Lemma \ref{lem:rounding}, $(\tilde{S}_b)_{b\in \tilde{B}}$ is a feasible solution to $\tilde{\II}_R$. Therefore, for any $b\in B$ either $w(S_b)=w(\emptyset)\leq W'_b$ or $w(S_b)=w(\tilde{S}_b)\leq \tilde{W}_b\leq W'_b$, where the last equality follows from Lemma \ref{lem:structuring}. Therefore, $w(A_b\cup S_b)\leq w(A_b) + W'_b \leq W_b$. Hence, the solution considered in each iteration is feasible for the input instance.
	\qed
\end{proof}

Theorem \ref{thm:main_result} follows 
from Lemmas \ref{lem:approx} and \ref{lem:correct}.

\subsection{Structuring the Instance}\label{sec:structuring}
In this section we present Algorithm \ref{alg:structuring} and prove Lemma \ref{lem:structuring}.
Our technique for generating an $N$-leveled partition can be viewed as a variant of the linear grouping technique of~\cite{de1981bin}.
We start with a brief overview of the classical concepts of grouping and shifting in the context of a multiple knapsack constraint.

Let $B=\{1,2,\ldots, m\}$ be a  set of bins with capacities $(W_b)_{b\in B}$, where
$W_1 \geq W_2 \geq \ldots \geq W_m$ and $m=q\cdot N^2$ for some integer $q \geq 1$.
We can partition $B$ into $N^2$ groups (sets) $B_1, \ldots ,B_{N^2}$, each consists of $q$ consecutive bins, i.e., 
$B_j=\{(j-1)\cdot q +1,\ldots, j\cdot q\}$ for $1\leq j \leq N^2$. Thus, the capacity of a bin in $B_{j}$ is greater or equal to the capacity of a bin in $B_{j+1}$. 

 We use the partition to define  new capacities for the bins. The  new capacity of a bin $b\in B_j$ is $\tilde{W}_{b}= \min_{b'\in B_j} W_{b'}=W_{q\cdot j}$, the  minimal (original) capacity of a bin in its group. Clearly,
 given an SMKP instance $\II=(I,w,B,W,f)$  and a feasible assignment $(S_b)_{b\in B}$ for the instance, it may be that $(S_b)_{b\in B}$ is infeasible for the instance with  the new capacities $\tilde{\II}=(I,w,B,\tilde{W},f)$.
 
 We can apply {\em shifting} to partially circumvent this hurdle. Given $b\in B_{j+1}$, $j\neq N^2$, the set $S_b$ complies with  the new capacity constraint of any bin $b'\in B_{j}$, i.e., $w(S_b)
\leq W_b\leq \tilde{W}_{b'}$. Define a new assignment $(\tilde{S}_b)_{b\in B}$  by $\tilde{S}_b=S_{b+q}$ for $b\in B\setminus B_{N^2}$
and $\tilde{S}_b=\emptyset$ for $b\in B_{N^2}$. 
As  $b+q\in B_{j+1}$  for any $b\in B_j$, $j\neq N^2$, it follows that $(\tilde{S}_b)_{b\in B}$ 
 is a feasible assignment for $\tilde{\II}$  . Furthermore, $(\tilde{S}_b)_{b\in B}$ is an assignment of all the items in $(S_b)_{b\in B}$, except for the items $\bigcup_{b\in B_1} S_b$ assigned to the first group in $(S_b)_{b\in B}$.
 The assignment of these items is handled by different techniques.
 
%
%
%
%

Algorithm \ref{alg:structuring} applies grouping with {\em non-uniform} group size 
to generate the $N$-leveled partition.
The  algorithm assumes w.l.o.g that the set of bins is $B=\{1,2,\ldots, m\}$ and that the bins are ordered by capacity, $W_1\geq W_2\geq \ldots \geq W_m$. 
It  defines groups
(or blocks) of bins, where  group $j$ consists of $N^{\lfloor  \frac{j}{N^2} \rfloor}$ consecutive bins, for $j \geq 0$. 
The capacity of the bins in each group is reduced to the minimal capacity of a bin in this group. 
This procedure is formalized in Algorithm~\ref{alg:structuring}. A simple illustration for a small instance is given in Figure \ref{fig:leveling}.

\begin{figure}
	\centering
	\caption[position=bottom]{Input and output example for Algorithm \ref{alg:structuring} with $N=2$. 
	 The original capacities, $W$, are represented by empty rectangles, whereas the hatched rectangles represent the new capacities $\tilde{W}$. Note that the last three bins are discarded by the algorithm as they do not form a full block.}
	\label{fig:leveling}

	\begin{tikzpicture}
	\draw [decorate,decoration={brace,amplitude=10pt}] (2.2,-0.1) -- (-0.1,-0.1) node [black,midway,yshift=-0.5cm]
	{\footnotesize Level $0$};
	\draw[pattern=north west lines] (0,0) rectangle ++(0.3,4);
	\draw[pattern=north west lines] (0.6,0) rectangle ++(0.3,3.8);
	\draw[pattern=north west lines] (1.2,0) rectangle ++(0.3,3.7);
	\draw[pattern=north west lines] (1.8,0) rectangle ++(0.3,3.4);
	
	\draw [decorate,decoration={brace,amplitude=10pt}] (6.4,-0.1) -- (2.5,-0.1) node [black,midway,yshift=-0.5cm]
	{\footnotesize Level $1$};
	\draw[] (2.6,0) rectangle ++(0.3,3.4);
	\draw[pattern=north west lines] (2.6,0) rectangle ++(0.3,3);
	\draw[pattern=north west lines] (3.0,0) rectangle ++(0.3,3);
	
	\draw[] (3.6,0) rectangle ++(0.3,2.8);
	\draw[pattern=north west lines] (3.6,0) rectangle ++(0.3,2.4);
	\draw[pattern=north west lines] (4.0,0) rectangle ++(0.3,2.4);
	
	\draw[] (4.6,0) rectangle ++(0.3,1.8);
	\draw[pattern=north west lines] (4.6,0) rectangle ++(0.3,1.8);
	\draw[pattern=north west lines] (5.0,0) rectangle ++(0.3,1.8);
	
	\draw[] (5.6,0) rectangle ++(0.3,1.7);
	\draw[pattern=north west lines] (5.6,0) rectangle ++(0.3,1.2);
	\draw[pattern=north west lines] (6.0,0) rectangle ++(0.3,1.2);
	
	\draw [decorate,decoration={brace,amplitude=10pt}] (8.4,-0.1) -- (6.7,-0.1) node [black,midway,yshift=-0.5cm]
	{\footnotesize Level $2$};
	\draw[] (6.8,0) rectangle ++(0.3,1);
	\draw[pattern=north west lines] (6.8,0) rectangle ++(0.3,0.5);
	\draw[] (7.2,0) rectangle ++(0.3,0.8);
	\draw[pattern=north west lines] (7.2,0) rectangle ++(0.3,0.5);
	\draw[] (7.6,0) rectangle ++(0.3,0.8);
	\draw[pattern=north west lines] (7.6,0) rectangle ++(0.3,0.5);
	\draw[pattern=north west lines] (8.0,0) rectangle ++(0.3,0.5);
	\draw[] (8.6,0) rectangle ++(0.3,0.5);
	\draw[] (9.0,0) rectangle ++(0.3,0.5);
	\draw[] (9.4,0) rectangle ++(0.3,0.4);
	
	\end{tikzpicture}
\end{figure}
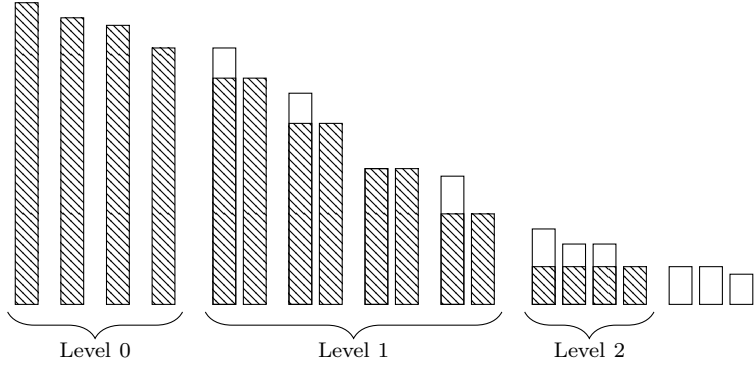

\begin{algorithm}[H]
	\SetAlgoLined
	\SetKwInOut{Input}{Input}\SetKwInOut{Configuration}{Configuration}
	\DontPrintSemicolon
	
	\Input{A set of bins $B$,  capacities $(W_b)_{b\in B}$ and $N\in \mathbb{N}$.}

	Let $B=\{1,\ldots, m\}$ where $W_1\geq W_2 \geq \ldots \geq  W_m$.\;
	
	Let $k = \max \left\{\ell \in \mathbb{N} ~\middle|~\sum_{r=0}^{\ell} N^{\floor{\frac{r}{N^2}}} \leq m\right\}$. \;
	
	Define $\tilde{B}_{j}= \left\{ b~\middle|~
	\sum_{r=0}^{j-1} N^{\floor{\frac{r}{N^2}}} ~< ~b~\leq \sum_{r=0}^{j} N^{\floor{\frac{r}{N^2}}} 
	\right\}$  for $0\leq j \leq k$.\; 
	
	Let $\tilde{B}= \bigcup_{j=0}^k \tilde{B}_j$, and $\tilde{W}_b= \min_{b'\in \tilde{B}_j} W_{b'}$ for all $0\leq j \leq k$ and $b\in \tilde{B}_j$.\;
	
	Return $\tilde{B}$, $(\tilde{W}_b)_{b\in \tilde{B}}$ and the partition $(\tilde{B}_j)_{j=0}^k$. 
	
	\caption{Structure in Blocks}
	\label{alg:structuring}
\end{algorithm}

By construction, we have that $(\tilde{B}_j)_{j=0}^{k}$ is an $N$-leveled partition of  $\tilde{B}$. Furthermore, $\tilde{B}\subseteq B$ and  $\tilde{W}_b \leq W_b$ for any $b\in \tilde{B}$. Finally, it can be  easily observed that  Algorithm \ref{alg:structuring} has a polynomial running time. 
Thus, to complete the proof of Lemma \ref{lem:structuring} we need to show that property \ref{prop:struct_third} holds as well. To this end, we use a variant of the shifting argument outlined in the above overview.

\begin{lemma}
	\label{lem:structuring_internal}
	Let $N\in \mathbb{N}$, $B$ be a set of bins with capacities $(W_b)_{b\in B}$ and let $\tilde{B}$, $\tilde{W}$ be the output of Algorithm \ref{alg:structuring}  for the input $B$, $W$ and $N$.
	Furthermore, let $I$ be a set of items with weights $(w_i)_{i\in I}$, $f:2^I \rightarrow \mathbb{R}_{\geq 0}$ be a  submodular non-negative function,
	 and  $(S_b)_{b\in B}$  be a feasible assignment for
	 $(I,w,B,W,f)$. Then there is $(\tilde{S}_b)_{b\in \tilde{B}}$ feasible for $(I,w,\tilde{B}, \tilde{W},f)$ such that
	 	 $f\left(\bigcup_{b\in \tilde{B}}\tilde{S}_b\right)\geq \left( 1- \frac{1}{N}\right)f\left(\bigcup_{b\in B} S_b\right)$ and $\bigcup_{b\in \tilde{B} } \tilde{S}_b \subseteq \bigcup_{b\in B} S_b$.
\end{lemma}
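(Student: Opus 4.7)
The plan is to construct a family of $N$ feasible assignments $(\tilde S^s_b)_{b\in\tilde B}$ parameterized by $s\in\{0,1,\dots,N-1\}$ for $(I,w,\tilde B,\tilde W,f)$ and apply a submodular averaging argument to extract one that satisfies the value bound. The key feasibility building block is that, since bins are sorted in decreasing order of original capacity and $\tilde W_b$ equals the smallest original capacity in the block of $b$, we have $\tilde W_b \ge W_{b'}$ whenever $b'$ lies in a block later than $b$; consequently, reassigning the items originally in any bin $b'$ to any bin $b$ in an earlier block (in the same or an earlier level) respects the reduced capacity of $b$.

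For each shift $s$ and each level $\ell$ I would mark $N$ of the $N^2$ blocks at level $\ell$ as \emph{evicted}, choosing the $N$ evicted positions as a function of $s$ so that, as $s$ ranges over $\{0,\dots,N-1\}$, the evicted positions partition $\{0,\dots,N^2-1\}$ (for instance, positions $\{sN, sN+1,\dots,sN+N-1\}$). The items originally assigned to the evicted blocks are discarded; the $N\cdot N^{\ell}=N^{\ell+1}$ bins in the evicted blocks at level $\ell$, each with reduced capacity at least the capacity of any bin at level $\ell+1$, bijectively host the $N^{\ell+1}$ items originally assigned to $\tilde B_{\ell+1,0}$. The remaining $N^2-N$ non-evicted blocks of level $\ell$ undergo a leftward shift: each non-evicted destination block receives the items of the next non-evicted source block. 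Residual boundary effects (the lowest level, a possibly partial last level, and items in $B\setminus\tilde B$) are absorbed into spare destinations so that no blocks beyond the $N$ evicted per level are lost and the disjoint-partition structure of the discarded sets $\{L_s\}$ is preserved.

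Let $T:=\bigcup_{b\in B}S_b$ and let $L_s$ denote the items discarded under shift $s$. By construction $\bigcup_{b\in\tilde B}\tilde S^s_b = T\setminus L_s\subseteq T$, and $L_0,\dots,L_{N-1}$ are pairwise disjoint subsets of $T$. Since $T\setminus L_s\supseteq\bigcup_{s'\ne s}L_{s'}$, submodularity gives
\[
f(T)-f(T\setminus L_s)=f(L_s\mid T\setminus L_s)\le f(L_s\mid \bigcup\nolimits_{s'\ne s}L_{s'}).
\]
Applying the chain rule in any fixed order of the $L_s$ and using submodularity once more,
\[
\sum_{s=0}^{N-1}f(L_s\mid \bigcup\nolimits_{s'\ne s}L_{s'})\le f\bigl(\bigcup\nolimits_{s=0}^{N-1}L_s\bigr)\le f(T).
\]
Hence $\sum_s[f(T)-f(T\setminus L_s)]\le f(T)$, and averaging gives some $s^*$ with $f(T\setminus L_{s^*})\ge (1-1/N)f(T)$. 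Setting $\tilde S_b:=\tilde S^{s^*}_b$ completes the proof.

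The main obstacle is the combinatorial design of the shift: selecting which $N$ positions per level to evict so that (i) the induced within-level leftward shift forms a valid matching without any further unmatched source blocks, (ii) the evicted bins precisely accommodate the next level's first block, and (iii) the discarded sets $\{L_s\}$ are pairwise disjoint across shifts --- the last property being exactly what makes the submodular averaging yield the $(1-1/N)$ factor. Feasibility and the final value bound then follow routinely.
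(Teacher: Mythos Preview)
Your strategy---evict one super-block (i.e., $N$ consecutive blocks) per level, shift the remaining blocks leftward so each destination receives items from a strictly later source block, and apply a submodular averaging over the $N$ eviction choices---is exactly the paper's approach; the paper merely packages the averaging as a standalone lemma (Lemma~\ref{lem:simple_evict}), picks a single $r^*$, and then builds one assignment, whereas you build all $N$ and then pick. There are, however, two genuine gaps in your write-up.

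\textbf{The construction does not work for $s=0$ as stated.} With your example positions $\{sN,\dots,sN+N-1\}$, the case $s=0$ evicts block~$0$ of \emph{every} level. You then say the evicted bins of level~$\ell$ host the items of $\tilde B_{\ell+1,0}$---but those items are themselves discarded, so the evicted bins of level~$\ell$ sit empty, while the first non-evicted block of level~$\ell$ (block~$N$) has no destination under your ``next non-evicted source'' rule; its items are neither in $L_0$ nor placed anywhere. This is not a boundary effect: any partition of $\{0,\dots,N^2-1\}$ into $N$ classes will have some class containing position~$0$, so one of your $N$ shifts must face this. The paper resolves it by an explicit \emph{Shuffle}: after evicting super-block $r^*$, it moves the (still-present) items of super-block $N{-}1$ into the vacated slot $r^*$, so the empty super-block is always the last one; the ensuing shift is then a uniform shift-by-one, with the last super-block of level~$t$ receiving the first block of level~$t{+}1$ regardless of $r^*$. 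Your scheme is fixable (e.g., have the evicted bins of level~$\ell$ host the first \emph{non-evicted} block of level~$\ell{+}1$), but the ``main obstacle'' you identify is not actually discharged.

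\textbf{The averaging step invokes monotonicity you do not have.} Your inequality $f\bigl(\bigcup_s L_s\bigr)\le f(T)$ requires $f$ monotone, but the lemma only assumes $f$ is non-negative and submodular. The fix is to keep the residual $R:=T\setminus\bigcup_s L_s$ in the conditioning: submodularity gives $f(T)-f(T\setminus L_s)\le f\bigl(L_s\mid R\cup\bigcup_{s'<s}L_{s'}\bigr)$, and summing telescopes to $f(T)-f(R)\le f(T)$ by non-negativity of $f$. The paper's Lemma~\ref{lem:simple_evict} does exactly this.
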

Clearly, Lemma \ref{lem:structuring_internal} completes the proof of Lemma \ref{lem:structuring}.
To prove Lemma \ref{lem:structuring_internal} we use the following property of submodular functions.
\begin{lemma}
	\label{lem:simple_evict}
	Let $h:2^{I}\rightarrow \mathbb{R}$ be a submodular  function and let $S_1,\ldots, S_N \subseteq I$ be disjoint sets. Then there is $1\leq r^* \leq N$ such that 
	$$h\left(\bigcup_{~1\leq r \leq N,~r\neq r^*~} S_r\right)~\geq~ 
	\left(1-\frac{1}{N} \right) h(S_1 \cup \ldots \cup S_N) .$$ 
\end{lemma}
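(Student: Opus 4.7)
The plan is to reduce the claim to an averaging argument over the $N$ sets $T_r = S\setminus S_r$, where $S = S_1\cup\ldots\cup S_N$. Specifically, it suffices to establish the aggregate inequality
\[
\sum_{r=1}^N h(S\setminus S_r) \;\geq\; (N-1)\,h(S) + h(\emptyset),
\]
since then some index $r^*$ satisfies $h(S\setminus S_{r^*})\geq \frac{N-1}{N} h(S)+\frac{1}{N}h(\emptyset)$, and in the setting in which the lemma is applied (the submodular functions arising in Section~\ref{sec:structuring} are non-negative, so in particular $h(\emptyset)\geq 0$), this gives $h(S\setminus S_{r^*})\geq(1-\frac{1}{N})h(S)$, as required.

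To prove the aggregate inequality I would use a telescoping decomposition. Write
\[
h(S)-h(\emptyset) \;=\; \sum_{r=1}^N\bigl[h(U_r)-h(U_{r-1})\bigr],
\]
where $U_r = S_1\cup\cdots\cup S_r$ and $U_0=\emptyset$. Since the $S_i$ are pairwise disjoint, $U_{r-1}\subseteq S\setminus S_r$, and $S_r$ is disjoint from $S\setminus S_r$. The set-form of submodularity (marginals shrink when the base set grows: for $A\subseteq B$ and $X$ disjoint from $B$, $h(A\cup X)-h(A)\geq h(B\cup X)-h(B)$) then applies with $A=U_{r-1}$, $B=S\setminus S_r$, and $X=S_r$, yielding
\[
h(U_r)-h(U_{r-1}) \;\geq\; h(S)-h(S\setminus S_r).
\]
Summing over $r=1,\ldots,N$ and rearranging produces exactly the aggregate inequality above.

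The main (essentially only) obstacle is identifying the right nesting and the right direction of submodularity: once the inclusion $U_{r-1}\subseteq S\setminus S_r$ is spotted, diminishing marginals do all the work, and the rest is a one-line averaging argument. No additional hypotheses on $h$ beyond submodularity are needed for the aggregate inequality itself; the mild non-negativity $h(\emptyset)\geq 0$ is used only in the final step to drop the $\frac{1}{N}h(\emptyset)$ term.
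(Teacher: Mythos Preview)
Your proof is correct and follows essentially the same approach as the paper: both use the telescoping decomposition $h(S)-h(\emptyset)=\sum_r[h(U_r)-h(U_{r-1})]$ together with the diminishing-marginals inequality $h(U_r)-h(U_{r-1})\geq h(S)-h(S\setminus S_r)$, and both drop the residual $\tfrac{1}{N}h(\emptyset)$ term using non-negativity. The only cosmetic difference is that the paper first picks $r^*$ by averaging the telescoping increments and then applies submodularity once to that index, whereas you apply submodularity to every $r$, sum, and average afterwards; your route additionally yields the aggregate inequality $\sum_r h(S\setminus S_r)\geq (N-1)h(S)+h(\emptyset)$, which is a slightly stronger intermediate statement.
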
 

The proof of the Lemma \ref{lem:simple_evict} is given in  \ref{app:submodular}.
\newproof{pstruct}{Proof of Lemma \ref{lem:structuring_internal}}

\begin{figure}
	\centering
	\caption[bottom]{Illustration of the steps in the proof of Lemma~\ref{lem:structuring_internal}
		for $N=3$. Each row represents a level and each box represents either a block or a super-block. The number in the box is the number of bins in the block and a gray background implies that the block is non-empty.}
	\label{fig:shifting}
	
	\begin{tikzpicture}
	\tikzmath{
		\b1=0;\b2=0.5;\b3=1.0;\b4=1.5;\b5=2.0;\b6=2.5;\b7=3.0;\b8=3.5;\b9=4.0;
		\yzero=2.4;\yone=1.4;\ytwo=0.4;
		\rightsidedelta=8.5;\bottomdelta=6;
	}
	\draw [->] (\b1-0.5,\yzero+\bottomdelta+0.25) to (\b1-0.2,\yzero+\bottomdelta+0.25) node [black,xshift=-0.6cm] {\footnotesize $\mL_0$};
	\draw [<->,Bar-Bar] (\b1,\yzero+\bottomdelta+0.65) to (\b4,\yzero+\bottomdelta+0.65) node [black,xshift=-0.7cm,yshift=0.05cm,fill=white] {\footnotesize $\mS_{0,0}$};
	\draw [<->,Bar-Bar] (\b4,\yzero+\bottomdelta+0.65) to (\b7,\yzero+\bottomdelta+0.65) node [black,xshift=-0.7cm,yshift=0.05cm,fill=white] {\footnotesize $\mS_{0,1}$};
	\draw [<->,Bar-Bar] (\b7,\yzero+\bottomdelta+0.65) to (\b9+0.5,\yzero+\bottomdelta+0.65) node [black,xshift=-0.7cm,yshift=0.05cm,fill=white] {\footnotesize $\mS_{0,2}$};
	
	\filldraw[color=black, fill=gray!30] (\b1,\yzero+\bottomdelta) rectangle ++(0.5,0.5) node [black,midway,font=\bfseries,yshift=0cm] {\footnotesize $1$};
	\filldraw[color=black, fill=gray!30] (\b2,\yzero+\bottomdelta) rectangle ++(0.5,0.5) node [black,midway,font=\bfseries,yshift=0cm] {\footnotesize $1$};
	\filldraw[color=black, fill=gray!30] (\b3,\yzero+\bottomdelta) rectangle ++(0.5,0.5) node [black,midway,font=\bfseries,yshift=0cm] {\footnotesize $1$};
	\filldraw[color=black, fill=gray!30] (\b4,\yzero+\bottomdelta) rectangle ++(0.5,0.5) node [black,midway,font=\bfseries,yshift=0cm] {\footnotesize $1$};
	\filldraw[color=black, fill=gray!30] (\b5,\yzero+\bottomdelta) rectangle ++(0.5,0.5) node [black,midway,font=\bfseries,yshift=0cm] {\footnotesize $1$};
	\filldraw[color=black, fill=gray!30] (\b6,\yzero+\bottomdelta) rectangle ++(0.5,0.5) node [black,midway,font=\bfseries,yshift=0cm] {\footnotesize $1$};
	\filldraw[color=black, fill=gray!30] (\b7,\yzero+\bottomdelta) rectangle ++(0.5,0.5) node [black,midway,font=\bfseries,yshift=0cm] {\footnotesize $1$};
	\filldraw[color=black, fill=gray!30] (\b8,\yzero+\bottomdelta) rectangle ++(0.5,0.5) node [black,midway,font=\bfseries,yshift=0cm] {\footnotesize $1$};
	\filldraw[color=black, fill=gray!30] (\b9,\yzero+\bottomdelta) rectangle ++(0.5,0.5) node [black,midway,font=\bfseries,yshift=0cm] {\footnotesize $1$};
	
	\draw [->] (\b1-0.5,\yone+\bottomdelta+0.25) to (\b1-0.2,\yone+\bottomdelta+0.25) node [black,xshift=-0.6cm] {\footnotesize $\mL_1$};
	\draw [<->,Bar-Bar] (\b1,\yone+\bottomdelta+0.65) to (\b4,\yone+\bottomdelta+0.65) node [black,xshift=-0.7cm,yshift=0.05cm,fill=white] {\footnotesize $\mS_{1,0}$};
	\draw [<->,Bar-Bar] (\b4,\yone+\bottomdelta+0.65) to (\b7,\yone+\bottomdelta+0.65) node [black,xshift=-0.7cm,yshift=0.05cm,fill=white] {\footnotesize $\mS_{1,1}$};
	\draw [<->,Bar-Bar] (\b7,\yone+\bottomdelta+0.65) to (\b9+0.5,\yone+\bottomdelta+0.65) node [black,xshift=-0.7cm,yshift=0.05cm,fill=white] {\footnotesize $\mS_{1,2}$};
	
	\filldraw[color=black, fill=gray!30] (\b1,\yone+\bottomdelta) rectangle ++(0.5,0.5) node [black,midway,font=\bfseries,yshift=0cm] {\footnotesize $3$};
	\filldraw[color=black, fill=gray!30] (\b2,\yone+\bottomdelta) rectangle ++(0.5,0.5) node [black,midway,font=\bfseries,yshift=0cm] {\footnotesize $3$};
	\filldraw[color=black, fill=gray!30] (\b3,\yone+\bottomdelta) rectangle ++(0.5,0.5) node [black,midway,font=\bfseries,yshift=0cm] {\footnotesize $3$};
	\filldraw[color=black, fill=gray!30] (\b4,\yone+\bottomdelta) rectangle ++(0.5,0.5) node [black,midway,font=\bfseries,yshift=0cm] {\footnotesize $3$};
	\filldraw[color=black, fill=gray!30] (\b5,\yone+\bottomdelta) rectangle ++(0.5,0.5) node [black,midway,font=\bfseries,yshift=0cm] {\footnotesize $3$};
	\filldraw[color=black, fill=gray!30] (\b6,\yone+\bottomdelta) rectangle ++(0.5,0.5) node [black,midway,font=\bfseries,yshift=0cm] {\footnotesize $3$};
	\filldraw[color=black, fill=gray!30] (\b7,\yone+\bottomdelta) rectangle ++(0.5,0.5) node [black,midway,font=\bfseries,yshift=0cm] {\footnotesize $3$};
	\filldraw[color=black, fill=gray!30] (\b8,\yone+\bottomdelta) rectangle ++(0.5,0.5) node [black,midway,font=\bfseries,yshift=0cm] {\footnotesize $3$};
	\filldraw[color=black, fill=gray!30] (\b9,\yone+\bottomdelta) rectangle ++(0.5,0.5) node [black,midway,font=\bfseries,yshift=0cm] {\footnotesize $3$};
	
	\draw [->] (\b1-0.5,\ytwo+\bottomdelta+0.25) to (\b1-0.2,\ytwo+\bottomdelta+0.25) node [black,xshift=-0.6cm] {\footnotesize $\mL_2$};
	
	\filldraw[color=black, fill=gray!30] (\b1,\ytwo+\bottomdelta) rectangle ++(0.5,0.5) node [black,midway,font=\bfseries,yshift=0cm] {\footnotesize $9$};
	\filldraw[color=black, fill=gray!30] (\b2,\ytwo+\bottomdelta) rectangle ++(0.5,0.5) node [black,midway,font=\bfseries,yshift=0cm] {\footnotesize $9$};
	\filldraw[color=black, fill=gray!30] (\b3,\ytwo+\bottomdelta) rectangle ++(0.5,0.5) node [black,midway,font=\bfseries,yshift=0cm] {\footnotesize $9$};
	\filldraw[color=black, fill=gray!30] (\b4,\ytwo+\bottomdelta) rectangle ++(0.5,0.5) node [black,midway,font=\bfseries,yshift=0cm] {\footnotesize $9$};
	\filldraw[color=black, fill=gray!30] (\b5,\ytwo+\bottomdelta) rectangle ++(0.5,0.5) node [black,midway,font=\bfseries,yshift=0cm] {\footnotesize $9$};
	\filldraw[color=black, fill=gray!30] (\b6,\ytwo+\bottomdelta) rectangle ++(0.5,0.5) node [black,midway,font=\bfseries,yshift=0cm] {\footnotesize $9$};
	\filldraw[color=black, fill=gray!30] (\b7,\ytwo+\bottomdelta) rectangle ++(0.5,0.5) node [black,midway,font=\bfseries,yshift=0cm] {\footnotesize $9$};
	\filldraw[color=black, fill=gray!30] (\b8,\ytwo+\bottomdelta) rectangle ++(0.5,0.5) node [black,midway,font=\bfseries,yshift=0cm] {\footnotesize $9$};
	\filldraw[color=black, fill=gray!30] (\b9,\ytwo+\bottomdelta) rectangle ++(0.5,0.5) node [black,midway,font=\bfseries,yshift=0cm] {\footnotesize $9$};
	
	\draw[black, arrows={-Classical TikZ Rightarrow[angle=90:10pt,black]}]  (\b9+1.5,\yone+\bottomdelta+0.25) -- (\b9+3.5,\yone+\bottomdelta+0.25) node [black,midway,font=\bfseries,xshift=-0.1cm,yshift=0.4cm] {\footnotesize {\bf Eviction}};
	
	\filldraw[color=black, fill=gray!30] (\b1+\rightsidedelta,\yzero+\bottomdelta) rectangle ++(0.5,0.5) node [black,midway,font=\bfseries,yshift=0cm] {\footnotesize $1$};
	\filldraw[color=black, fill=gray!30] (\b2+\rightsidedelta,\yzero+\bottomdelta) rectangle ++(0.5,0.5) node [black,midway,font=\bfseries,yshift=0cm] {\footnotesize $1$};
	\filldraw[color=black, fill=gray!30] (\b3+\rightsidedelta,\yzero+\bottomdelta) rectangle ++(0.5,0.5) node [black,midway,font=\bfseries,yshift=0cm] {\footnotesize $1$};
	\filldraw[color=black,fill=gray!0] (\b4+\rightsidedelta,\yzero+\bottomdelta) rectangle ++(0.5,0.5) node [black,midway,font=\bfseries,yshift=0cm] {\footnotesize $1$};
	\filldraw[color=black, fill=gray!0] (\b5+\rightsidedelta,\yzero+\bottomdelta) rectangle ++(0.5,0.5) node [black,midway,font=\bfseries,yshift=0cm] {\footnotesize $1$};
	\filldraw[color=black, fill=gray!0] (\b6+\rightsidedelta,\yzero+\bottomdelta) rectangle ++(0.5,0.5) node [black,midway,font=\bfseries,yshift=0cm] {\footnotesize $1$};
	\filldraw[color=black, fill=gray!30] (\b7+\rightsidedelta,\yzero+\bottomdelta) rectangle ++(0.5,0.5) node [black,midway,font=\bfseries,yshift=0cm] {\footnotesize $1$};
	\filldraw[color=black, fill=gray!30] (\b8+\rightsidedelta,\yzero+\bottomdelta) rectangle ++(0.5,0.5) node [black,midway,font=\bfseries,yshift=0cm] {\footnotesize $1$};
	\filldraw[color=black, fill=gray!30] (\b9+\rightsidedelta,\yzero+\bottomdelta) rectangle ++(0.5,0.5) node [black,midway,font=\bfseries,yshift=0cm] {\footnotesize $1$};
	
	\filldraw[color=black, fill=gray!30] (\b1+\rightsidedelta,\yone+\bottomdelta) rectangle ++(0.5,0.5) node [black,midway,font=\bfseries,yshift=0cm] {\footnotesize $3$};
	\filldraw[color=black, fill=gray!30] (\b2+\rightsidedelta,\yone+\bottomdelta) rectangle ++(0.5,0.5) node [black,midway,font=\bfseries,yshift=0cm] {\footnotesize $3$};
	\filldraw[color=black, fill=gray!30] (\b3+\rightsidedelta,\yone+\bottomdelta) rectangle ++(0.5,0.5) node [black,midway,font=\bfseries,yshift=0cm] {\footnotesize $3$};
	\filldraw[color=black, fill=gray!0] (\b4+\rightsidedelta,\yone+\bottomdelta) rectangle ++(0.5,0.5) node [black,midway,font=\bfseries,yshift=0cm] {\footnotesize $3$};
	\filldraw[color=black, fill=gray!0] (\b5+\rightsidedelta,\yone+\bottomdelta) rectangle ++(0.5,0.5) node [black,midway,font=\bfseries,yshift=0cm] {\footnotesize $3$};
	\filldraw[color=black, fill=gray!0] (\b6+\rightsidedelta,\yone+\bottomdelta) rectangle ++(0.5,0.5) node [black,midway,font=\bfseries,yshift=0cm] {\footnotesize $3$};
	\filldraw[color=black, fill=gray!30] (\b7+\rightsidedelta,\yone+\bottomdelta) rectangle ++(0.5,0.5) node [black,midway,font=\bfseries,yshift=0cm] {\footnotesize $3$};
	\filldraw[color=black, fill=gray!30] (\b8+\rightsidedelta,\yone+\bottomdelta) rectangle ++(0.5,0.5) node [black,midway,font=\bfseries,yshift=0cm] {\footnotesize $3$};
	\filldraw[color=black, fill=gray!30] (\b9+\rightsidedelta,\yone+\bottomdelta) rectangle ++(0.5,0.5) node [black,midway,font=\bfseries,yshift=0cm] {\footnotesize $3$};
	
	\filldraw[color=black, fill=gray!30] (\b1+\rightsidedelta,\ytwo+\bottomdelta) rectangle ++(0.5,0.5) node [black,midway,font=\bfseries,yshift=0cm] {\footnotesize $9$};
	\filldraw[color=black, fill=gray!30] (\b2+\rightsidedelta,\ytwo+\bottomdelta) rectangle ++(0.5,0.5) node [black,midway,font=\bfseries,yshift=0cm] {\footnotesize $9$};
	\filldraw[color=black, fill=gray!30] (\b3+\rightsidedelta,\ytwo+\bottomdelta) rectangle ++(0.5,0.5) node [black,midway,font=\bfseries,yshift=0cm] {\footnotesize $9$};
	\filldraw[color=black, fill=gray!30] (\b4+\rightsidedelta,\ytwo+\bottomdelta) rectangle ++(0.5,0.5) node [black,midway,font=\bfseries,yshift=0cm] {\footnotesize $9$};
	\filldraw[color=black, fill=gray!30] (\b5+\rightsidedelta,\ytwo+\bottomdelta) rectangle ++(0.5,0.5) node [black,midway,font=\bfseries,yshift=0cm] {\footnotesize $9$};
	\filldraw[color=black, fill=gray!30] (\b6+\rightsidedelta,\ytwo+\bottomdelta) rectangle ++(0.5,0.5) node [black,midway,font=\bfseries,yshift=0cm] {\footnotesize $9$};
	\filldraw[color=black, fill=gray!30] (\b7+\rightsidedelta,\ytwo+\bottomdelta) rectangle ++(0.5,0.5) node [black,midway,font=\bfseries,yshift=0cm] {\footnotesize $9$};
	\filldraw[color=black, fill=gray!30] (\b8+\rightsidedelta,\ytwo+\bottomdelta) rectangle ++(0.5,0.5) node [black,midway,font=\bfseries,yshift=0cm] {\footnotesize $9$};
	\filldraw[color=black, fill=gray!30] (\b9+\rightsidedelta,\ytwo+\bottomdelta) rectangle ++(0.5,0.5) node [black,midway,font=\bfseries,yshift=0cm] {\footnotesize $9$};
	
	\draw[black, arrows={-Classical TikZ Rightarrow[angle=90:10pt,black]}]  (\b9+3.5,\ytwo+\bottomdelta-1.0) -- (\b9+1.5,\yzero+1.5) node [black,sloped, midway,font=\bfseries,xshift=0cm,yshift=0.4cm] {\footnotesize {\bf Shuffling}};

	\filldraw[color=black, fill=gray!30] (\b1,\yzero) rectangle ++(0.5,0.5) node [black,midway,font=\bfseries,yshift=0cm] {\footnotesize $1$};
	\filldraw[color=black, fill=gray!30] (\b2,\yzero) rectangle ++(0.5,0.5) node [black,midway,font=\bfseries,yshift=0cm] {\footnotesize $1$};
	\filldraw[color=black, fill=gray!30] (\b3,\yzero) rectangle ++(0.5,0.5) node [black,midway,font=\bfseries,yshift=0cm] {\footnotesize $1$};
	\filldraw[color=black, fill=gray!30] (\b4,\yzero) rectangle ++(0.5,0.5) node [black,midway,font=\bfseries,yshift=0cm] {\footnotesize $1$};
	\filldraw[color=black, fill=gray!30] (\b5,\yzero) rectangle ++(0.5,0.5) node [black,midway,font=\bfseries,yshift=0cm] {\footnotesize $1$};
	\filldraw[color=black, fill=gray!30] (\b6,\yzero) rectangle ++(0.5,0.5) node [black,midway,font=\bfseries,yshift=0cm] {\footnotesize $1$};
	\filldraw[color=black, fill=gray!0] (\b7,\yzero) rectangle ++(0.5,0.5) node [black,midway,font=\bfseries,yshift=0cm] {\footnotesize $1$};
	\filldraw[color=black, fill=gray!0] (\b8,\yzero) rectangle ++(0.5,0.5) node [black,midway,font=\bfseries,yshift=0cm] {\footnotesize $1$};
	\filldraw[color=black, fill=gray!0] (\b9,\yzero) rectangle ++(0.5,0.5) node [black,midway,font=\bfseries,yshift=0cm] {\footnotesize $1$};
	
	\draw [<->] (\b5+0.25,\yzero+0.6) to [out=30,in=150] (\b8+0.25,\yzero+0.6);
	
	\filldraw[color=black, fill=gray!30] (\b1,\yone) rectangle ++(0.5,0.5) node [black,midway,font=\bfseries,yshift=0cm] {\footnotesize $3$};
	\filldraw[color=black, fill=gray!30] (\b2,\yone) rectangle ++(0.5,0.5) node [black,midway,font=\bfseries,yshift=0cm] {\footnotesize $3$};
	\filldraw[color=black, fill=gray!30] (\b3,\yone) rectangle ++(0.5,0.5) node [black,midway,font=\bfseries,yshift=0cm] {\footnotesize $3$};
	\filldraw[color=black, fill=gray!30] (\b4,\yone) rectangle ++(0.5,0.5) node [black,midway,font=\bfseries,yshift=0cm] {\footnotesize $3$};
	\filldraw[color=black, fill=gray!30] (\b5,\yone) rectangle ++(0.5,0.5) node [black,midway,font=\bfseries,yshift=0cm] {\footnotesize $3$};
	\filldraw[color=black, fill=gray!30] (\b6,\yone) rectangle ++(0.5,0.5) node [black,midway,font=\bfseries,yshift=0cm] {\footnotesize $3$};
	\filldraw[color=black, fill=gray!0] (\b7,\yone) rectangle ++(0.5,0.5) node [black,midway,font=\bfseries,yshift=0cm] {\footnotesize $3$};
	\filldraw[color=black, fill=gray!0] (\b8,\yone) rectangle ++(0.5,0.5) node [black,midway,font=\bfseries,yshift=0cm] {\footnotesize $3$};
	\filldraw[color=black, fill=gray!0] (\b9,\yone) rectangle ++(0.5,0.5) node [black,midway,font=\bfseries,yshift=0cm] {\footnotesize $3$};
	
	\draw [<->] (\b5+0.25,\yone+0.6) to [out=30,in=150] (\b8+0.25,\yone+0.6);
	
	\filldraw[color=black, fill=gray!30] (\b1,\ytwo) rectangle ++(0.5,0.5) node [black,midway,font=\bfseries,yshift=0cm] {\footnotesize $9$};
	\filldraw[color=black, fill=gray!30] (\b2,\ytwo) rectangle ++(0.5,0.5) node [black,midway,font=\bfseries,yshift=0cm] {\footnotesize $9$};
	\filldraw[color=black, fill=gray!30] (\b3,\ytwo) rectangle ++(0.5,0.5) node [black,midway,font=\bfseries,yshift=0cm] {\footnotesize $9$};
	\filldraw[color=black, fill=gray!30] (\b4,\ytwo) rectangle ++(0.5,0.5) node [black,midway,font=\bfseries,yshift=0cm] {\footnotesize $9$};
	\filldraw[color=black, fill=gray!30] (\b5,\ytwo) rectangle ++(0.5,0.5) node [black,midway,font=\bfseries,yshift=0cm] {\footnotesize $9$};
	\filldraw[color=black, fill=gray!30] (\b6,\ytwo) rectangle ++(0.5,0.5) node [black,midway,font=\bfseries,yshift=0cm] {\footnotesize $9$};
	\filldraw[color=black, fill=gray!30] (\b7,\ytwo) rectangle ++(0.5,0.5) node [black,midway,font=\bfseries,yshift=0cm] {\footnotesize $9$};
	\filldraw[color=black, fill=gray!30] (\b8,\ytwo) rectangle ++(0.5,0.5) node [black,midway,font=\bfseries,yshift=0cm] {\footnotesize $9$};
	\filldraw[color=black, fill=gray!30] (\b9,\ytwo) rectangle ++(0.5,0.5) node [black,midway,font=\bfseries,yshift=0cm] {\footnotesize $9$};

	\draw[black, arrows={-Classical TikZ Rightarrow[angle=90:10pt,black]}]  (\b9+1.5,\yone+0.25) -- (\b9+3.5,\yone+0.25) node [black,midway,font=\bfseries,xshift=-0.1cm,yshift=0.4cm] {\footnotesize {\bf Shifting}};

	\filldraw[color=black, fill=gray!30] (\b1+\rightsidedelta,\yzero) rectangle ++(0.5,0.5) node [black,midway,font=\bfseries,yshift=0cm] {\footnotesize $1$};
	\filldraw[color=black, fill=gray!30] (\b2+\rightsidedelta,\yzero) rectangle ++(0.5,0.5) node [black,midway,font=\bfseries,yshift=0cm] {\footnotesize $1$};
	\filldraw[color=black, fill=gray!30] (\b3+\rightsidedelta,\yzero) rectangle ++(0.5,0.5) node [black,midway,font=\bfseries,yshift=0cm] {\footnotesize $1$};
	\filldraw[color=black, fill=gray!30] (\b4+\rightsidedelta,\yzero) rectangle ++(0.5,0.5) node [black,midway,font=\bfseries,yshift=0cm] {\footnotesize $1$};
	\filldraw[color=black, fill=gray!30] (\b5+\rightsidedelta,\yzero) rectangle ++(0.5,0.5) node [black,midway,font=\bfseries,yshift=0cm] {\footnotesize $1$};
	\filldraw[color=black, fill=gray!30] (\b6+\rightsidedelta,\yzero) rectangle ++(0.5,0.5) node [black,midway,font=\bfseries,yshift=0cm] {\footnotesize $1$};
	\filldraw[color=black, fill=gray!30] (\b7+\rightsidedelta,\yzero) rectangle ++(1.5,0.5) node [black,midway,font=\bfseries,yshift=0cm] {\footnotesize $3$};
	
	\draw [->] (\b1+0.25+\rightsidedelta,\yone+0.5) to [out=30,in=-150] (\b8+0.25+\rightsidedelta,\yzero-0.02);
	
	\filldraw[color=black, fill=gray!30] (\b1+\rightsidedelta,\yone) rectangle ++(0.5,0.5) node [black,midway,font=\bfseries,yshift=0cm] {\footnotesize $3$};
	\filldraw[color=black, fill=gray!30] (\b2+\rightsidedelta,\yone) rectangle ++(0.5,0.5) node [black,midway,font=\bfseries,yshift=0cm] {\footnotesize $3$};
	\filldraw[color=black, fill=gray!30] (\b3+\rightsidedelta,\yone) rectangle ++(0.5,0.5) node [black,midway,font=\bfseries,yshift=0cm] {\footnotesize $3$};
	\filldraw[color=black, fill=gray!30] (\b4+\rightsidedelta,\yone) rectangle ++(0.5,0.5) node [black,midway,font=\bfseries,yshift=0cm] {\footnotesize $3$};
	\filldraw[color=black, fill=gray!30] (\b5+\rightsidedelta,\yone) rectangle ++(0.5,0.5) node [black,midway,font=\bfseries,yshift=0cm] {\footnotesize $3$};
	\filldraw[color=black, fill=gray!0] (\b6+\rightsidedelta,\yone) rectangle ++(0.5,0.5) node [black,midway,font=\bfseries,yshift=0cm] {\footnotesize $3$};
	\filldraw[color=black, fill=gray!30] (\b7+\rightsidedelta,\yone) rectangle ++(1.5,0.5) node [black,midway,font=\bfseries,yshift=0cm] {\footnotesize $9$};
	
	\draw [->,thick,-Triangle] (\b2+\rightsidedelta+0.13,\yone+0.25) to (\b2+\rightsidedelta-0.17,\yone+0.25);
	\draw [->,thick,-Triangle] (\b3+\rightsidedelta+0.13,\yone+0.25) to (\b3+\rightsidedelta-0.17,\yone+0.25);
	\draw [->,thick,-Triangle] (\b4+\rightsidedelta+0.13,\yone+0.25) to (\b4+\rightsidedelta-0.17,\yone+0.25);
	\draw [->,thick,-Triangle] (\b5+\rightsidedelta+0.13,\yone+0.25) to (\b5+\rightsidedelta-0.17,\yone+0.25);
	\draw [->,thick,-Triangle] (\b6+\rightsidedelta+0.13,\yone+0.25) to (\b6+\rightsidedelta-0.17,\yone+0.25);
	
	\draw [->] (\b1+0.25+\rightsidedelta,\ytwo+0.5) to [out=30,in=-150] (\b8+0.25+\rightsidedelta,\yone-0.02);
	
	\filldraw[color=black, fill=gray!30] (\b1+\rightsidedelta,\ytwo) rectangle ++(0.5,0.5) node [black,midway,font=\bfseries,yshift=0cm] {\footnotesize $9$};
	\filldraw[color=black, fill=gray!30] (\b2+\rightsidedelta,\ytwo) rectangle ++(0.5,0.5) node [black,midway,font=\bfseries,yshift=0cm] {\footnotesize $9$};
	\filldraw[color=black, fill=gray!30] (\b3+\rightsidedelta,\ytwo) rectangle ++(0.5,0.5) node [black,midway,font=\bfseries,yshift=0cm] {\footnotesize $9$};
	\filldraw[color=black, fill=gray!30] (\b4+\rightsidedelta,\ytwo) rectangle ++(0.5,0.5) node [black,midway,font=\bfseries,yshift=0cm] {\footnotesize $9$};
	\filldraw[color=black, fill=gray!30] (\b5+\rightsidedelta,\ytwo) rectangle ++(0.5,0.5) node [black,midway,font=\bfseries,yshift=0cm] {\footnotesize $9$};
	\filldraw[color=black, fill=gray!30] (\b6+\rightsidedelta,\ytwo) rectangle ++(0.5,0.5) node [black,midway,font=\bfseries,yshift=0cm] {\footnotesize $9$};
	\filldraw[color=black, fill=gray!30] (\b7+\rightsidedelta,\ytwo) rectangle ++(0.5,0.5) node [black,midway,font=\bfseries,yshift=0cm] {\footnotesize $9$};
	\filldraw[color=black, fill=gray!30] (\b8+\rightsidedelta,\ytwo) rectangle ++(0.5,0.5) node [black,midway,font=\bfseries,yshift=0cm] {\footnotesize $9$};
	\filldraw[color=black, fill=gray!0] (\b9+\rightsidedelta,\ytwo) rectangle ++(0.5,0.5) node [black,midway,font=\bfseries,yshift=0cm] {\footnotesize $9$};
	
	\draw [->,thick,-Triangle] (\b2+\rightsidedelta+0.13,\ytwo+0.25) to (\b2+\rightsidedelta-0.17,\ytwo+0.25);
	\draw [->,thick,-Triangle] (\b3+\rightsidedelta+0.13,\ytwo+0.25) to (\b3+\rightsidedelta-0.17,\ytwo+0.25);
	\draw [->,thick,-Triangle] (\b4+\rightsidedelta+0.13,\ytwo+0.25) to (\b4+\rightsidedelta-0.17,\ytwo+0.25);
	\draw [->,thick,-Triangle] (\b5+\rightsidedelta+0.13,\ytwo+0.25) to (\b5+\rightsidedelta-0.17,\ytwo+0.25);
	\draw [->,thick,-Triangle] (\b6+\rightsidedelta+0.13,\ytwo+0.25) to (\b6+\rightsidedelta-0.17,\ytwo+0.25);
	\draw [->,thick,-Triangle] (\b7+\rightsidedelta+0.13,\ytwo+0.25) to (\b7+\rightsidedelta-0.17,\ytwo+0.25);
	\draw [->,thick,-Triangle] (\b8+\rightsidedelta+0.13,\ytwo+0.25) to (\b8+\rightsidedelta-0.17,\ytwo+0.25);
	\draw [->,thick,-Triangle] (\b9+\rightsidedelta+0.13,\ytwo+0.25) to (\b9+\rightsidedelta-0.17,\ytwo+0.25);
	
	\end{tikzpicture}
\end{figure}
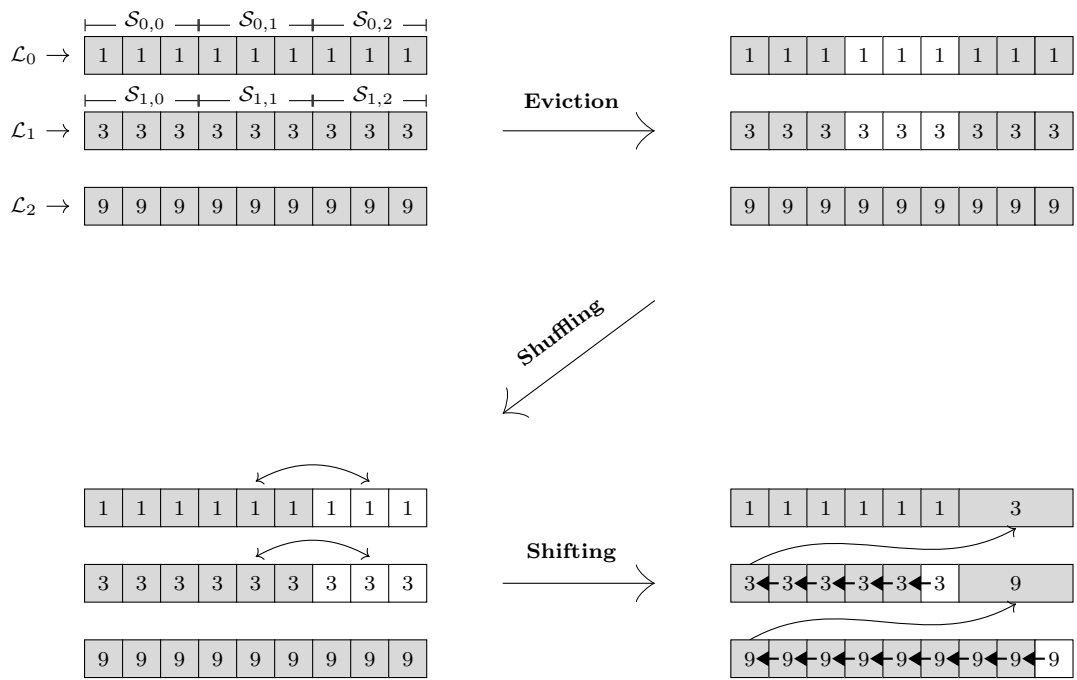

\begin{pstruct}

	W.l.o.g assume that $B=\{1,2,\ldots, m\}$ and $W_1\geq W_2\geq \ldots \geq W_m$. 
	Furthermore, assume the sets $(S_b)_{b\in B}$ are disjoint. 
	We modify $(S_b)_{b\in B}$ using a sequence of steps, eventually obtaining a feasible assignment $(\tilde{S}_b)_{b\in 
		\tilde{B}}$  for $(I,w,\tilde{B}, \tilde{W}, f)$. Figure \ref{fig:shifting} gives an illustration of these steps. 
	
	Define $\tilde{B}_{k+1} = B\setminus \tilde{B}$. We note that  $\tilde{B}_{k+1}$ may be empty. We partition $\{\tilde{B}_j|~ 0\leq j \leq k+1\}$ into {\em levels} and {\em super-bloks}. We consider each $N^2$ consecutive blocks to be a {\em level}, and each $N$ consecutive blocks within a level to be a {\em super-block}. Formally, level $t$ is
	$$\mL_t =\left\{j~| ~t\cdot N^2 \leq j <\min\{(t+1)N^2, ~k+2 \}\right\}$$ 
	for $0\leq t \leq \ell$ with $\ell = \floor{\frac{k+1}{N^2}}$. 	We note that if $\ell=0$, then $(S_b)_{b\in B}$ is a feasible assignment for the instance $(I,w,\tilde{B},\tilde{W},f)$ and the claim hold. Thus, we may assume that $\ell\geq 1$.
	The super-block 
	$r$ of level $t$ is  
	$$\mS_{t,r} = \left\{j~|~ t\cdot N^2 +r\cdot N\leq j  < t\cdot N^2 +(r+1)\cdot N \right\}$$
	for $0\leq r<N$ and level $0\leq t <\ell$ (we do not partition the last level into super-blocks). 
	It  follows that $B= \bigcup_{t=0}^{\ell} \bigcup_{j\in \mL_t} \tilde{B}_j$ and $\mL_t=\bigcup_{r=0}^{N -1 } \mS_{t,r}$ for $0\leq t<\ell$. Furthermore, for any $j\in \mL_t$, $j\neq k+1$  it holds that $|\tilde{B}_j| =N^t$ and $|\tilde{B}_{k+1}| < N^{\ell}$. Essentially, all the blocks of level $t$ are of the same size, and the number of bins in a super-block in level $t-1$ is the number of bins in a single block of level $t$. 
	 We use this property in the shifting process where the assignments of items to bins in the first blocks of level $t$ are shifted to the last super-block of level $t-1$. The {\em eviction} and  {\em shuffle} steps, described below, are used to ensure that all bins in the last super-block of each level are empty when shifting is applied.
	%
	%
	
 We modify the assignment  $(S_b)_{b\in B}$ using the following steps. 
	
	\noindent	{\bf Eviction: }
	We first evict a super-block of bins from each level (except the last one). Let $R= \bigcup_{j\in \mL_{\ell}} \bigcup_{b\in \tilde{B}_j} S_b$ be the subset of items assigned to the last level, and let $g:2^I \rightarrow \mathbb{R}_{\geq 0}$ defined by $g(Q) = f(Q \cup R)$. Note that 
	 $g$ is submodular and non-negative (see Claim \ref{lem:submodular_cup}).
	 Also, let $V_{t,r} = \bigcup_{j\in \mS_{t,r }} \bigcup_{b\in \tilde{B}_j } S_b$ be the set of items assigned to super-block $r$ of level $t$ for $0\leq r <N$ and $0\leq t <\ell$. Then,  
	by definition, we have 
	$$f\left( \bigcup_{b\in B} S_b\right) = g\left( \bigcup_{r=0}^{N-1} \bigcup_{t=0}^{\ell-1} V_{t,r}\right) .$$ 
	Furthermore, we note that the sets $\left( \bigcup_{t=0}^{\ell -1} V_{t,r}\right)_{r=0}^{N-1}$ are disjoint. Hence, 
	by Lemma \ref{lem:simple_evict}  there is $0\leq r^*<N$ such that 
	\begin{equation*}
	\begin{aligned}
	g&\left(  
	\bigcup_{
		~0\leq r < N, ~r\neq r^*
	} 
\bigcup_{t=0}^{\ell-1} V_{t,r} \right) 
	\geq 
	\left(
	1-\frac{1}{N}
	\right) \cdot
	g\left( \bigcup_{r=0}^{N-1} \bigcup_{t=0}^{\ell-1} V_{t,r}\right)=
	\left( 1-\frac{1}{N}\right) f\left(\bigcup_{b\in B} S_b\right).
	\end{aligned}
	\end{equation*}

	We define a new 	assignment $(T_b)_{b\in 	B}$  by
	$$ T_b =\begin{cases}
	\emptyset & b\in B_j \textnormal{ for } j\in \mS_{t,r^*} \textnormal { and } 0\leq t < \ell  \\
	S_b~~~ & \text{otherwise}
	\end{cases}
	$$
	for any  $b\in B$. 
	 Thus,
	\begin{equation}
	\label{eq:tilde_S_lb}
	f\left(\bigcup_{b\in B} T_b\right) =
	g\left(  
	\bigcup_{~0\leq r < N-1,~r\neq r^*~} \bigcup_{j\in \mS_{t,r}} 
	\bigcup_{t=0}^{\ell -1} V_{t,r}\right)
	\geq  \left( 1-\frac{1}{N}\right) f\left(\bigcup_{b\in B} S_b\right).
	\end{equation}
	It also holds that $T$ is a feasible assignment for the instance $(I,w,B,W,f)$, since  $T_b\in \{ S_b, \emptyset\}$ for any $b\in B$. By the same argument, it follows that 
	 $\bigcup_{b\in B} T_b\subseteq \bigcup_{b\in B} S_b$.

	\noindent {\bf Shuffling:} We now generate a new assignment $\tilde{T}$ such that $\bigcup_{b\in B} \tilde{T}_b = \bigcup_{b\in B} T_b$, and the last super-block in each level (except the last one) is empty. 
	This property is obtained by moving the assignments of
	the  bins in  super-block $N-1$ to the bins of super-block $r^*$ for every $0\leq t <\ell$. 	
	
	In case $r^* = N-1$ we simply define $(\tilde{T}_b)_{b\in B}$ by $\tilde{T}_b= T_b$ for any $b\in B$. Otherwise we have $r^* \neq N-1$.
	For any $0\leq t<\ell$, 
	let 
	$\varphi_t: \bigcup_{j\in \mS_{t,r^*}} \tilde{B}_j \rightarrow 
		\bigcup_{j\in \mS_{t,N-1}} \tilde{B}_j
	$ be a bijection from  
	the bins of super-block $r^*$ to the bins of the last super-block  of level $t$ (note that both sets have the same cardinality, thus such a bijection exists). 
	We define $\left(\tilde{T}_b\right)_{b\in B}$  by
	$$\tilde{T}_b = \begin{cases}
	\emptyset & \textnormal{$b\in 
		\tilde{B}_j$ for $j\in \mS_{t,N-1}$, $0\leq t<\ell$}\\
	T_{\varphi_t(b)}&
	\textnormal{$b\in \tilde{B}_j$ for $j\in \mS_{t,r^*}$, $0\leq t<\ell$}\\
	T_{b}& \textnormal{otherwise}
	\end{cases}
	$$
	for any $b\in B$. 
	For $b\in \tilde{B}_j$ with $j\in \mS_{t, r^*}$, $0\leq t<\ell$  it holds that $$w\left(\tilde{T}_b\right) = w\left(T_{\varphi_t(b)}\right) \leq W_{\varphi_t(b)} \leq W_b$$
	where the last inequality follows from $\varphi_t(b)> b$.  
	 Also, for any other bin $b\in B$ it holds that $\tilde{T}_b\in \{\emptyset, T_b\}$ 
thus $w(\tilde{T}_b)\leq W_b$.

	In both cases it holds that $\bigcup_{b\in B} \tilde{T}_b = \bigcup_{b\in B} T_b$
	 and  $\tilde{T}$ is a feasible assignment for $(I,w,B,W,f)$.

	\noindent{\bf Shifting:}
		In this step we generate a feasible assignment $(\tilde{S}_b)_{b\in \tilde{B}}$ for the instance $(I,w,\tilde{B}, \tilde{W},f)$. 
	As the bins of the last super-block in each level (except the last level)  are vacant in $\left(\tilde{T}\right)_{b\in  B}$, we use them for the assignment of the first block of the next level.
	This can be done 
	since $N$ blocks of level $t$ contain the same number of bins as a single block of level $t+1$. We also use blocks in levels greater than $0$ which are not in the last  super-block to store the assignment of the subsequent block in the same level. 
	
	Formally, define $(\tilde{S}_b)_{b\in \tilde{B}}$ by
	$$
	\tilde{S}_b =
	\begin{cases}
	\tilde{T}_{b+N^{t+1}} & \textnormal{$b\in \tilde{B}_j$ with $j\in \mS_{t,N-1}$, $0\leq t<\ell$} \\
	\tilde{T}_{b+N^{t}} & \textnormal{$b\in \tilde{B}_j$ with $j\in \mL_{t}\setminus  \mS_{t,N-1}$, $0<t< \ell$}
	\\
	\tilde{T}_{b+N^{\ell}} & \textnormal{$b\in \tilde{B}_j$ with $j\in \mL_{\ell}$, $b+N^{\ell}\leq m$} \\
	\emptyset & \textnormal{$b\in \tilde{B}_j$ with $j\in \mL_{\ell}$, $b+N^{\ell}> m$}\\
	\tilde{T}_b & b\leq N^2-N
	\end{cases} 
	$$
	for any $b\in \tilde{B}$. The first case defines the shifting of the  assignments of the first block of level $t+1$ to the last super-block of level $t$. The second and third cases define the shifting of assignments of  a block to the previous block. The forth case handles the last block in $\tilde{B}$, and the last case indicates that the assignments of the first $N^2-N$ bins remain in place.
	
	Let $b\in \tilde{B}$. If $\tilde{S}_b=\emptyset$ then $w(\tilde{S}_b)=0\leq \tilde{W}_b$. 
	If $b\leq N^2-N$ then $\{b\}= \tilde{B}_{b-1}$. Hence, $w(\tilde{S}_b)=w(\tilde{T}_b) \leq W_b = \tilde{W}_b$.
	 In any other case, there are $0< j <j' \leq k+1$ and $b'\in \tilde{B}_{j'}$ such that $b\in \tilde{B}_j$ and $\tilde{S}_b = \tilde{T}_{b'}$.
	It follows that $W_{b'}\leq \tilde{W}_ b$ by the definition of $\tilde{W}_b$.  Thus, 
	$w(\tilde{S}_b)=w(\tilde{T}_{b'})\leq W_{b'}\leq \tilde{W}_b$. That is,  $(\tilde{S}_b)_{b\in \tilde{B}}$ is feasible for $(I,w,\tilde{B}, \tilde{W},f)$.

	Clearly, $\bigcup_{b\in \tilde{B}} \tilde{S}_b\subseteq \bigcup_{b\in B} \tilde{T}_b$.
	 Let $i\in \bigcup_{b\in B} \tilde{T}_b$; thus, there is $b\in B$ such that $i\in T_b$. 
	 There is $0\leq j \leq k+1$ such that $b\in \tilde{B}_j$. Also, 
	  since $T_b\neq \emptyset$ it holds that  $b\not\in \mS_{t,N -1}$ for all $0\leq t\leq \ell$. If $j\in \mL_0$ then $\tilde{S}_b = \tilde{T}_b$ as $b\not\in
	  \mS_{0,N -1}$, hence $i\in \bigcup_{b\in \tilde{B}} \tilde{S}_b$. Otherwise,  either $j\in \mL_t\setminus S_{t,N-1}$ for $1\leq t< \ell$ or  $j\in \mL_t$ for $t=\ell$ , it can be verified that in both cases $\tilde{S}_{b-N^t}=\tilde{T}_t$; thus, $i\in \bigcup_{b\in \tilde{B}} \tilde{S}_b$. Therefore, it also holds that $\bigcup_{b\in \tilde{B}} \tilde{S}_b = \bigcup_{b\in B} \tilde{T}_b= \bigcup_{b\in B} T_b\subseteq \bigcup_{b\in B} S_b$, and by \eqref{eq:tilde_S_lb}
	  $$f\left(\bigcup_{b\in \tilde{B}} \tilde{S}_b \right)= f\left(\bigcup_{b\in B} \tilde{T}_b \right)\geq \left(1-\frac{1}{N}\right) \cdot f\left(\bigcup_{b\in B} S_b \right).$$ 
	\qed
\end{pstruct}

\subsection{Solving a Continuous Relaxation and Rounding} \label{sec:rounding}

In this section we give Algorithm~\ref{alg:rounding}
which outputs a solution
satisfying Lemma \ref{lem:rounding}.
The input for the algorithm is a $\delta$-restricted SMKP instance along with a partition $B\setminus B^r=  B_1 \cup \ldots \cup B_k$ of the bins, where $B_j$ is a block for all $1\leq j \leq k$.
The algorithm utilizes the {\em block-constraint instance} defined in Section \ref{sec:block_constraint}. We give the algorithm in 
Section~\ref{sec:delta_restricted}.

\subsubsection{The Block-Constraint Instance}
\label{sec:block_constraint}

Recall that a $\delta$-restricted SMKP instance is defined by an SMKP instance  $\II=(I,w,B,W,f)$ and a set of restricted bins $B^r\subseteq B$. Given such an instance,   a partition $ B_1, \ldots ,B_k$ of $B\setminus B^r$ to blocks and $\aeps>0$, we define their associated  {\em block constraint instance} as a triplet $(E,P,g)$, where $E$ is a set, $P\subseteq [0,1]^{E}$ is a polytope and $g:2^{E}\rightarrow \mathbb{R}_{\geq 0}$ is a monotone non-negative submodular function. The instance $(E,P,g)$ defines the optimization problem $\max_{T\subseteq E:~\bx^T\in P} g(T)$; however, this point of view is only used indirectly.\footnote{For a set $T\subseteq E$, we use $\bx^T$ to denote the vector $\bx^T\in \{0,1\}^{E}$ defined by $\bx^T_e = 1$ for $e\in T$, and $\bx^T_e=0$ for $e\in E\setminus T$.} In the following we give the formal definition of the block constraint instance $(E,P,g)$.


For simplicity, let $\{ B_{k+1}, \ldots ,B_{\ell}\} = \{\{b\}|~ b\in B^r\}$ be a set of blocks, each consisting of a single bin. Thus, $B =\cup_{j=1}^{\ell} B_j$.
Denote the (uniform) capacity of the bins in block $B_j$ by $W^*_j$, for $1 \leq j \leq \ell$. That is, for any $b\in B_j$ it holds that $W^*_j = W_b$.
For $1\leq j \leq k$, we
say that an item $i\in I$ is {\em $j$-small} if $w_i \leq \aeps \cdot W_j^*$, otherwise $i$ is {\em $j$-large}. Let $I_j=\{ \{i\}| \textnormal{ $i$ is $j$-small} \}$ for $1\leq j \leq k$. For $k<j\leq \ell$ define $I_j =\{ \{i \}~|~w_i\leq \delta \cdot W^*_j\}$.

A {\em $j$-configuration} is a subset of $j$-large items which can be packed into a single bin in $B_j$. That is, $C\subseteq I$ is a $j$-configuration if every item $i\in C$ is $j$-large and
$w(C) \leq W^*_j$. Let $\cC_j$ be the set of all $j$-configurations for $1\leq j \leq k$ and $\cC_j=\emptyset $ for $k<j\leq \ell$. As any $j$-configuration has at most ${\aeps}^{-1}$ items, it follows that $|\cC_j| \leq |I|^{\aeps^{-1}}$, i.e., the number of configurations is polynomial in the size of ${\II}$.
Furthermore, for 
$A\subseteq I$ such that $w(A)\leq W^*_j$, $1\leq j \leq k$, there are $C\in \cC_j$ and $S\subseteq I$ such that all the items in $S$ are $j$-small and $A= C \cup S$. Our algorithm exploits this property.


The set $E$ is defined by 
$E = \{(S,j)|~ S\in \cC_j \cup I_j,~1\leq j\leq \ell\}$. Informally, the element $(S,j)\in E$ represents an assignment of all the items in $S$ to a single bin $b\in B_j$. The function $g:2^E\rightarrow \mathbb{R}_{\geq 0}$  is defined by
$g(T) = f\left( \bigcup_{(S,j)\in T} S\right)$. By Claim \ref{submodular-extension}, $g$ is a submodular, monotone and non-negative function.

We define the polytope $P$ as follows.
\begin{equation}
\label{eq:polydef}
P =\left\{\bx\in [0,1]^E
\middle|\begin{array}{lcl}
\displaystyle
\sum_{C\in \cC_j} \bx_{(C, j)} \leq |B_j| & & \forall 1\leq j \leq k \\
\displaystyle
\sum_{S\in \cC_j\cup I_j} w(S) \cdot \bx_{(S, j)} \leq |B_j|\cdot W^*_j& & \forall 1\leq j \leq \ell
\end{array}
\right\}\end{equation}

The polytope represents a relaxed version of the capacity constraints over the bins.
For each block $B_j$, $1\leq j \leq k$, we only require  that the {\em total weight} of items assigned to bins in $B_j$ does not exceed the  total capacity of the bins in this block. We also require that
the number of $j$-configurations
selected for $B_j$ is no greater than the number of bins in this block.



Our algorithm for solving $\delta$-restricted SMKP  uses the unified greedy algorithm of~\cite{feldman2011unified} to find $\bx\in P$ such that $G(\bx)$ is of high value, where $G$ is the multilinear extension of $g$. Subsequently, a random set $T$ is  sampled based on $\bx$. 
The set $T$ is then converted to a solution for the original instance 
using Algorithm~\ref{alg:convert_blockconst_SMKP}. The approximation guarantee of the above process relies on the following connection between  the $\delta$-restricted SMKP instance and the block constraint instance. 


\begin{lemma}
	\label{lem:block_constraints_prop}
	Let $\II=(I,w,B,W,f)$ and $B^r$  be an instance of $\delta$-restricted SMKP, $B_1,\ldots, B_k$ a partition of $B\setminus B^r$ to blocks and $\aeps>0$.  Furthermore, let $(E,P,g)$ be the block constraint instance of the above.  Then the following hold:
	\begin{enumerate}
		\item There is $T\subseteq E$, $\bx^T\in P$ such that $g(T) \geq \OPT(\II)$, where $\OPT(\II)$ is the optimal solution value for $\delta$-restricted SMKP instance $\II$ and $B^r$.
		\label{block_const_prop1}
		\item Given $T\subseteq E$ such that
		$\bx^T\in (1-\aeps)\cdot P$, Algorithm~\ref{alg:convert_blockconst_SMKP} returns in polynomial time a feasible solution $(A_b)_{b\in B}$ for
		$\delta$-restricted SMKP instance $\II$ and $B^r$ satisfying $f(\cup_{b\in B} A_b) = g(T)$.\footnote{Given a polytope $Q$ and $\eta \geq 0$ we use the notation $\eta \cdot Q = \{ \eta \bx ~|~ \bx \in Q\}$.}
			
		\label{block_const_prop2}
	\end{enumerate}
\end{lemma}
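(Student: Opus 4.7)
The plan is to handle the two parts separately. For part \ref{block_const_prop1}, I would start from an optimal solution $(A^*_b)_{b\in B}$ of the $\delta$-restricted SMKP instance and construct $T\subseteq E$ directly. For every $b\in B_j$ with $1\leq j\leq k$ split $A^*_b$ into its $j$-large part $C_b=\{i\in A^*_b \mid i \text{ is $j$-large}\}$ and its $j$-small part; since $w(C_b)\leq w(A^*_b)\leq W^*_j$, the set $C_b$ is a $j$-configuration. Put $(C_b,j)$ into $T$ together with $(\{i\},j)$ for every $j$-small $i\in A^*_b$. For $k<j\leq \ell$ (so $B_j=\{b\}$ is a single restricted bin) put $(\{i\},j)$ into $T$ for every $i\in A^*_b$; the $\delta$-restrictedness ensures $\{i\}\in I_j$. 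Then $g(T)=f\!\left(\bigcup_b A^*_b\right)=\OPT(\II)$ (in fact with equality, not merely $\geq$), and $\bx^T\in P$ because the configuration count in block $B_j$ is at most $|\{C_b : b\in B_j\}|\leq |B_j|$ and the weight in block $B_j$ is at most $\sum_{b\in B_j} w(A^*_b)\leq |B_j|\,W^*_j$ (even with possible coalescing of identical configurations, the LHS of the weight constraint only shrinks).

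For part \ref{block_const_prop2}, I would describe Algorithm~\ref{alg:convert_blockconst_SMKP} as follows. Process each block separately. For a block $B_j$ with $1\leq j\leq k$, let $\{(C_1,j),\ldots,(C_p,j)\}$ be the configuration elements selected in $T$; by the first defining inequality of $(1-\aeps)P$ we have $p\leq (1-\aeps)|B_j|$, so we can place each $C_r$ into its own bin of $B_j$, leaving at least $\aeps |B_j|$ bins entirely empty. Then take the $j$-small items $\{i : (\{i\},j)\in T\}$ and pack them by First Fit into the bins of $B_j$ (treating bins already containing a configuration as partially filled). For a restricted block $B_j$ with $k<j\leq \ell$, simply assign all items $\{i : (\{i\},j)\in T\}$ to the unique bin. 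By construction, the items packed across all bins form exactly $\bigcup_{(S,j)\in T} S$, so $f(\cup_b A_b)=g(T)$.

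The main technical step is the correctness of First Fit for the small items. Suppose for contradiction that First Fit fails to place some $j$-small item $i$; then every bin in $B_j$ has remaining capacity $<w_i\leq \aeps W^*_j$, so every bin holds strictly more than $(1-\aeps)W^*_j$, giving a total weight strictly greater than $|B_j|(1-\aeps)W^*_j$ already placed. Combined with the remaining weight still to be scheduled (at least $w_i>0$), this contradicts the second defining inequality of $(1-\aeps)P$, which guarantees the total weight of all selected elements is at most $(1-\aeps)|B_j|W^*_j$. Hence First Fit succeeds. The only remaining checks are that the configuration bins trivially respect capacity (by definition of $\cC_j$), that the restricted-bin assignment respects capacity by the weight inequality with $|B_j|=1$, and that the whole procedure is polynomial. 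I expect no serious obstacle beyond this First Fit contradiction; the rest is bookkeeping, with the mild subtlety that in part~\ref{block_const_prop1} one must ensure that double-counting when two bins happen to share the same configuration is handled, which is automatic because the weight constraint in $P$ sums over distinct elements and is therefore only slacker after coalescing.
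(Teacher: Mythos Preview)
Your plan is correct and, for part~\ref{block_const_prop1}, coincides with the paper's construction. For part~\ref{block_const_prop2} you propose a different algorithm than the paper's Algorithm~\ref{alg:convert_blockconst_SMKP}: you place the selected configurations into distinct bins first (using the configuration-count constraint of $(1-\aeps)P$) and then First-Fit the small singletons; the paper instead sorts \emph{all} elements of $T$ by $w(S)$ in decreasing order and always assigns to the least-loaded bin in $B_j$, an LPT-style rule. The correctness arguments are almost the same---the paper's proof also reduces to a two-case contradiction (the overflowing element is either a configuration, forcing more than $|B_j|$ configurations, or a small singleton, forcing every bin above $(1-\aeps)W^*_j$)---but your version makes the case split structural in the algorithm rather than in the analysis, which is slightly cleaner, while the paper's single loop is more uniform to state. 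Your observation about coalescing identical configurations in part~\ref{block_const_prop1} is a genuine subtlety the paper does not mention explicitly; your handling of it is correct.
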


\begin{algorithm}[H]
	\SetAlgoLined
	\SetKwInOut{Input}{Input}
	\DontPrintSemicolon
	\Input{ A $\delta$-restricted SMKP instance $\II=(I,w,B,W,f)$ and $B^r$,  a partition $B_1,\ldots, B_k$ of $B\setminus B^r$ to blocks  and $T \subseteq E$.}
	
	 Let $\{ B_{k+1}, \ldots ,B_{\ell}\} = \{\{b\}|~ b\in B^r\}$. \;
	Set $A_b=\emptyset $ for every $b\in B$.\;
	Sort the elements $(S,j)$ in $T$ in decreasing order by the $w(S)$ values.\;
	
	\For{ each  $(S,j)\in T$ in the sorted order
	}{
		Set $A_b \leftarrow A_b\cup S$ where $b=\argmin_{b\in B_j} w(A_b)$.
		\label{line:const}\;
	}
	Return $(A_b)_{b\in B}$.
	\label{line:feasible_blockconst}\;
	
	\caption{Employ a Block-Constraint Solution for SMKP}
	\label{alg:convert_blockconst_SMKP}
\end{algorithm}

\begin{proof}
	
	We start by proving part~\ref{block_const_prop1}.
		Let $(A^*_b)_{b\in B}$ be an optimal solution for the $\delta$-restricted SMKP instance, and let $L_j$ be the set of all $j$-large items for $1\leq j \leq k$, and $L_j=\emptyset$ for $k<j\leq \ell$ ( recall we use $\{ B_{k+1}, \ldots ,B_{\ell}\} = \{\{b\}|~ b\in B^r\}$).  Define
		$$T = \left(\bigcup_{j=1}^{k}
		\left\{ (A^*_b \cap L_j, j) ~|~ b\in B_j\right\}
		\right)
		~\cup~
		\left(
		\bigcup_{j=1}^{\ell} \bigcup_{b\in B_j} \left\{ (\{i\}, j)~|~i\in A^*_b\setminus L_j\right\}\right).
		$$
		
		It can be easily shown that $g(T)=f(\cup_{b\in B} A^*_b)$. Furthermore, as $(A^*_b)_{b\in B}$ is a feasible solution, it holds that $\bx^T\in P$.

	 We now prove part~\ref{block_const_prop2}.
	Let $(A_b)_{b\in B}$ be the output of Algorithm~\ref{alg:convert_blockconst_SMKP} for the given input. We first note that $\cup_{b\in B} A_b = \cup_{(S,j)\in T} S$, and thus $g(T) = f(\cup_{b\in B} A_b)$.
	
	For any $b\in B^r$, there is $k<j\leq \ell$  such that $B_j=\{b\}$. Therefore $A_b = \{i| (\{i\}, j)\in T\}$,
	and since $\bx^T\in (1-\aeps)P$, it follows that $w(A_b)\leq W^*_j = W_b$.

	Let $1\leq j \leq k$ and $b\in B_j$. Assume by negation that $w(A_b) > W_b=W^*_j$. Let $(S,j)\in T$ be the last element in $T$ such that $S\neq \emptyset$ and $S$ was added to $A_b$ in Line \ref{line:const}.  We conclude that $w(A_b\setminus S)>0$, as otherwise  $w(A_b)=w(S)\leq W_b$, by the definition of $E$.
	Therefore, there are at least $|B_j|$ elements $(S',j)\in T$ such that $w(S')\geq w(S)$ (else, in the iteration of $(S,j)$ there must be $b\in B_j$ with $A_b=\emptyset$). If $S\in \cC_j$ then $w(S)> \aeps \cdot W^*_j$, and thus
	$$\left|\left\{S'\neq \emptyset|~ (S',j)\in T,~ S'\in \cC_j \right\}\right| ~\geq~
	\left|\left\{S'|~ (S',j)\in T,~ w(S')\geq w(S) \right\}\right|~> ~|B_j|,$$
	contradicting $\bx^T\in (1-\aeps) P$.
	
	Therefore $S\notin \cC_j$, and we conclude that $S=\{i\}$ with $w_i \leq \aeps \cdot W^*_j$. Thus, $w(A_b\setminus S)> (1-\aeps) \cdot W^*_j $.
	Here, $S$ was allocated to  $A_b$ (which is a set of minimum weight). Then,
	for any $b'\in B_j$, we have
	$w(A_{b'})\geq w(A_b)> (1-\aeps)\cdot W^*_j$.
	Thus,
	$$\sum_{(S',j) \in T} w(S')\geq \sum_{b'\in B_j } w(A_{b'}) > |B_j| (1-\aeps)\cdot W^*_j,$$
	contradicting $\bx^T\in (1-\aeps)P$. We conclude that $w(A_b)\leq W_b$.
	
	Also, by definition, we have that for any $b\in B^r$ and $i\in A_b$  it holds that $w_i \leq \delta\cdot W_b$. Hence, $(A_b)_{b\in B}$ is a solution for the restricted SMKP instance.
	\qed
\end{proof}

\subsubsection{An Algorithm for $\delta$-restricted SMKP}
\label{sec:delta_restricted}
We are now ready to present our algorithm for $\delta$-restricted SMKP.
We note that in Line~\ref{line:rand} of Algorithm~\ref{alg:rounding} we use sampling by a solution vector $\bx^*$, as defined in Section~\ref{sec:prelim}.

\begin{algorithm}[H]
	\SetAlgoLined
	\SetKwInOut{Input}{Input}
	\Input{A $\delta$-restricted SMKP instance ${\II}$ and $B^r$, a partition $B_1,\ldots, B_k$ of $B\setminus B^r$ to blocks, and a parameter $\aeps>0$. }
	
	\DontPrintSemicolon
	Let $(E,P,g)$ be the block-constraint instance of $\II$, $B^r$,  $(B_j)_{j=1}^k$ and $\aeps$.
	\;
	Let $G:[0,1]^E\rightarrow \mathbb{R}_{\geq 0}$ be the multilinear extension of $g$.
	Find a solution $\bar{y}^*$ for $\max_{\bx\in  P } G(\bx)$ using the unified greedy of \cite{feldman2011unified}.
	\label{line:cont_greedy}
	\;
	Let $\bx^*= \frac{1-\aeps}{1+\aeps}\cdot\bar{y}^*$ and 
	sample a random set $T\sim\bx^*$. 
	\label{line:rand}
	\;
	
	\uIf{ $T\in (1-\aeps)P$}{
		Use Algorithm~\ref{alg:convert_blockconst_SMKP} to convert $T$ into a solution $(A_b)_{b\in B}$ for $\delta$-restricted SMKP instance  ${\II}$ and $B^r$.  Return  $(A_b)_{b\in B}$.
		\label{line:convert}
		\;
	}
	\Else{
		Return $(A_b)_{b\in B}$ with $A_b=\emptyset$ for every $b\in B$.
		\label{line:empty}
	}
	\label{line:feasible}

	\caption{Solve and  Round}
	\label{alg:rounding}
\end{algorithm}

For the analysis, consider first the running time. We note that, for any $\bar{\lambda}\in \mathbb{R}^{E}$, a vector $\bx \in P$ which maximizes $\bx \cdot \bar{\lambda}$ can be found in polynomial time. Therefore, the continuous greedy in Line \ref{line:cont_greedy} runs in polynomial time. Thus, Algorithm~\ref{alg:rounding} has a polynomial running time.

It remains to show that the algorithm returns a solution of expected value as stated in Lemma~\ref{lem:rounding}.
Similarly to \cite{chekuri2009dependent}, we use submodular concentration bounds within the proof. 
We note it is possible to prove a variant of this lemma using an approach of \cite{kulik2013approximations}.
While eliminating the dependenc on $\upsilon$, this will result in a more involved proof (recall that $\upsilon$ is defined in~(\ref{eq:def_upsilon})).

\newproof{pround}{Proof of Lemma \ref{lem:rounding}}
\begin{pround}
For any $e\in E$ define $X_e$ to be a random variable such that $X_e =1$ if $e\in T$ and $X_e=0$ otherwise. It follows that $(X_e)_{e\in E}$ are independent Bernoulli random variables,  $\E[X_e]= \bx^*_e$ and $T=\{e\in E| X_e=1\}$.

We first consider blocks $k<j\leq \ell$.
Let  $k<j\leq \ell$  and $B_j=\{b\}$.
Since $\bx^*\in \frac{1-\aeps}{1+\aeps}P$, it follows that
$\E\left[ \sum_{(S,j)\in E} {w(S)} \cdot X_{(S,j)}\right]
\leq \frac{1-\aeps}{1+\aeps} \cdot W_b.$
Also, $w_{(S,j)}\leq {\delta\cdot W_b}$ for every $(S,j)\in E$. Using Chernoff's bound (Theorem 3.1 in \cite{gandhi2006dependent}, see also Lemma~\ref{lem:chernoff}), we have
\begin{equation}
\label{eq:restricted_bound}
\begin{aligned}
\Pr\left(\sum_{(S,j) \in T} w(S) > (1-\aeps)W_b \right)
&\leq \exp\left(-\frac{\aeps^2}{3} \cdot \frac{1-\aeps}{1+\aeps} \cdot \frac{1}{\delta} \right) \leq
\exp\left(-\frac{\aeps^2}{12} \cdot \frac{1}{\delta} \right),
\end{aligned}
\end{equation}
where the last inequality follows from  $\aeps\in (0,0.1)$.

Now, let $1\leq j\leq k$.  For every $(S,j)\in E$ it holds that ${w(S)}\leq W^*_j$. Also, since $\bx^*\in \frac{1-\aeps}{1+\aeps}P$, it holds that 
$
\E\left[ \sum_{(S,j)\in E} w(S) \cdot X_{(S,j)}\right]  
\leq \frac{1-\aeps}{1+\aeps}\cdot|B_j| \cdot { W^*_j}
$,
and
$\E\left[ \sum_{(S,j)\in E:~S\in \cC_j} 1 \cdot X_{(S,j)}\right]
\leq \frac{1-\aeps}{1+\aeps}\cdot|B_j|$.
Therefore, by Chernoff's bound (Theorem 3.1 in \cite{gandhi2006dependent} and Lemma~\ref{lem:chernoff}), we have
\begin{eqnarray}
\label{eq:block_weight_bound}
\Pr\left(\sum_{(S,j) \in T} w(S) > (1-\aeps)|B_j| W^*_j \right)
\leq \exp\left(-\frac{\aeps^2}{3} \cdot \frac{1-\aeps}{1+\aeps} \cdot |B_j|\right)
\leq \exp\left(-\frac{\aeps^2}{12}   \cdot |B_j|\right)\\
\label{eq:block_cong_bound}
\Pr\left(\sum_{(S,j) \in T:~S\in \cC_j} 1 > (1-\aeps)|B_j| \right)
\leq \exp\left(-\frac{\aeps^2}{3}\cdot \frac{1-\aeps}{1+\aeps} \cdot |B_j|\right)\leq
\exp\left(-\frac{\aeps^2}{12}\cdot  |B_j|\right).
\end{eqnarray}

By Lemma~\ref{lem:block_constraints_prop}, $\max_{\bz\in P \cap \{0,1\}^E} G(\bz) \geq \OPT(\II)$.
As the unified greedy of \cite{feldman2011unified} yields a $(1-e^{-1}-o(1))$-approximation for the problem of maximizing the multilinear extension subject to a polytope constraint, it follows that
$G(\bar{y}^*) \geq (1-e^{-1 })(1-\aeps)  \OPT(\II)$ (under the assumption that the number of items is sufficiently large). Since the second derivatives of $G$ are non-positive (see \cite{calinescu2011maximizing}) it follows that
\begin{equation}
	\label{eq:G_lb}
	G(\bx^*) \geq\frac{1-\aeps}{1+\aeps} G(\bar{y}^*) \geq  (1-e^{-1 }) \frac{(1-\aeps)^2}{1+\aeps} \OPT(\II).
\end{equation}


For any $(S,j)\in E$ we have $|S|\leq \aeps^{-1}$, and by the submodularity of $f$, $g(\{(S,j)\})- g(\emptyset) \leq \aeps^{-1}  \upsilon$.
Therefore, by the concentration bound  of \cite{chekuri2009dependent} (see Lemma \ref{lem:extension_cher}), we have
\begin{equation}
\label{eq:exp_bound}
\begin{aligned}
&\Pr\left( g(T) \leq (1-e^{-1}) \frac{(1-\aeps)^3}{1+\aeps} \OPT(\II)
\right)
\leq
\Pr\left( g( \{ e\in E| X_e= 1 \}) \leq (1-\aeps)G(\bx^*)
\right)\\
\leq& \exp\left( - \frac{\aeps^{3}  \cdot G(\bx^*)}{2 \upsilon }\right)
\leq  \exp\left( -\frac{\mu^3 (1-e^{-1}) }{2
	\upsilon } \frac{(1-\aeps)^2}{1+\aeps} \OPT(\II)\right)
\leq
\exp\left( -\frac{\mu^3\cdot \OPT(\II)}{16 \cdot
	\upsilon} \right)
\end{aligned}
\end{equation}
The first and third inequality are due to~\eqref{eq:G_lb}.

Let $\omega$ be the event $\bx^T\in (1-\aeps) P$  and $g(T) \geq \frac{(1-\aeps)^3}{1+\aeps}(1-e^{-1})\OPT(\II)$.
By applying the union bound over \eqref{eq:restricted_bound}, \eqref{eq:block_weight_bound}, \eqref{eq:block_cong_bound} and \eqref{eq:exp_bound}, we have
\begin{equation*}
\Pr(\omega) \geq
1-\left(
|B^r| \exp\left(-\frac{\aeps^2}{12}
\frac{1}{\delta}\right)  - 2  \sum_{j=1}^{k} \exp\left(-\frac{\aeps^2}{12} |B_j| \right)
-\exp\left(-\frac{\aeps^3 }{16}  \frac{\OPT(\II)}{\upsilon} \right) \right) =1-\gamma.
\end{equation*}
In case the event  $\omega$ occurs, the algorithm executes Line~\ref{line:convert}, and by Lemma \ref{lem:block_constraints_prop}, $f(\cup_{b\in B}A_b)  = f(T)$. Hence,
\begin{equation*}
\E\left[f(\cup_{b\in B} A_b)\right] = \Pr\left(\omega\right) \cdot
E\left[f(\cup_{b\in B} A_b)
\middle|  \omega
\right]  \geq (1-\gamma) \frac{(1-\aeps)^3}{1+\aeps} (1-e^{-1})\OPT(\II).
\end{equation*}

Also, the algorithm either returns an empty solution when Line \ref{line:empty} executes, or Line \ref{line:convert}	 executes. In the latter case the solution is feasible by Lemma \ref{lem:block_constraints_prop}. Therefore, the algorithm always returns a feasible solution.
\qed
\end{pround}

\section{Discussion}

In this paper 
we presented a randomized $(1-e^{-1}-\eps)$-approximation
for the  monotone submodular multiple knapsack problem. Our algorithm relies on three main building blocks. The structuring technique (Section \ref{sec:structuring}) which converts a general instance to a leveled instance, the reduction to the block-constraint instance (Section \ref{sec:block_constraint}) and a refined analysis of known algorithms for submodular optimization with a $d$-dimensional knapsack constraint (Section \ref{sec:delta_restricted}).
While the structuring technique and the refined analysis seem to be fairly robust, the reduction to the block-constraint instance proved to be limiting for some natural generalizations of our problem.

One notable example is the {\em non-monotone} submodular multiple knapsack problem, in which the set function $f$ is non-monotone. Unfortunately, in this case
the function $g$ used for solving the block-constraint problem is not submodular.
In a recent work~\cite{FKS20} we overcome this obstacle by applying the structuring technique and refined analysis in conjunction with a new notion of {\em fractional grouping} to circumvent the block reduction. 

While our algorithm yields an almost optimal $(1-e^{-1}-\eps)$-approximation  for SMKP in polynomial time, the dependence of the running time on $\eps$ renders it purely theoretical.
We note that the same holds also for the algorithms of \cite{sun2020tight}.
The problem of finding a polynomial time $(1-e^{-1})$-approximation for SMKP, i.e.,  eliminating the dependence on $\eps$, remains open.   


It would be interesting to extend our results for SMKP to instances in which the profit and weight of an item may depend on the bin to which it is assigned. More specifically, consider 
the {\em monotone submodular  generalized assignment problem (SGAP)}
defined as follows. The input is 
a set of bins $B$ of capacities $(W_b)_{b\in B}$, a set of items $I$, a weight $w_{i,b}$ for every pair $(i,b)$, where $i \in I$ and $b \in B$ 
(the weight of item $i$ when assigned to bin $b$), and a monotone submodular function $f:2^{I\times B}\rightarrow \mathbb{R}_{\geq 0}$. A feasible solution is a set of pairs of items and bins $A\subseteq I\times B$ such that, for any $b\in B$, 
$\sum_{\{i |(i,b)\in A\}} w_{i,b}\leq W_b$
and, for any $i\in I$, $| \{b~|~(i,b)\in A\}| \leq 1$. That is, the total weight of items assigned to a bin does not exceed its capacity, and an item cannot be assigned to multiple bins. The objective is to find a feasible solution $A$ such that $f(A)$ is maximized. We note that the special case where $f$ is modular is the well known {\em generalized assignment problem (GAP)} (see, e.g.,~\cite{fleischer2011tight, feige1995approximating}). It is likely that an $\alpha$-approximation for SGAP can be derived using ideas of~\cite{vondrak2008optimal} for $\alpha$ strictly smaller than $1-e^{-1}$. We conjecture that SGAP admits a $(1-e^{-1})$-approximation.

\bibliography{bibfile}
\appendix
\section{Basic Properties of Submodular Functions}
\label{app:submodular}

\newcommand{\bbeta}{{\bar{\beta}}}


\begin{claim}
	\label{lem:submodular_cup}
	Let $f:2^I \rightarrow \mathbb{R}$ be a set function and $R\subseteq I$. Define $g:2^I\rightarrow \mathbb{R}$ by $g(S)=f(S\cup R)$ for any $S\subseteq I$. Then,
	\begin{enumerate}
		\item If $f$ is submodular then $g$ is submodular.
		\item If $f$ is monotone then $g$ is monotone.
	\end{enumerate}
\end{claim}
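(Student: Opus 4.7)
The plan is to give a direct elementary verification of both parts, using only the definitions of monotonicity and submodularity for $f$.

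For the monotonicity part, I would take $A \subseteq B \subseteq I$ and observe that $A \cup R \subseteq B \cup R$. Then, applying the monotonicity of $f$ to the pair $A \cup R$, $B \cup R$ immediately yields $g(A) = f(A \cup R) \leq f(B \cup R) = g(B)$. This is a one-line argument.

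For the submodularity part, I plan to use the union/intersection form of submodularity (the one quoted in the footnote on page~1 of the paper), since it is slightly cleaner than the marginal-value form. Fix $A, B \subseteq I$. By submodularity of $f$ applied to the sets $A \cup R$ and $B \cup R$,
\[
f(A \cup R) + f(B \cup R) \;\geq\; f\bigl((A \cup R) \cup (B \cup R)\bigr) + f\bigl((A \cup R) \cap (B \cup R)\bigr).
\]
The key algebraic step is the distributive identity
\[
(A \cup R) \cap (B \cup R) \;=\; (A \cap B) \cup R,
\]
combined with $(A \cup R) \cup (B \cup R) = (A \cup B) \cup R$. Substituting back in terms of $g$ gives $g(A) + g(B) \geq g(A \cup B) + g(A \cap B)$, which is exactly submodularity of $g$.

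There is no real obstacle here; the claim is standard and the argument is essentially one identity for the intersection. If a reader prefers the marginal-gain formulation, I would alternatively take $A \subseteq B$ and $i \in I \setminus B$, split into the cases $i \in R$ (both marginals equal zero so the inequality is trivial) and $i \notin R$ (so $i \notin B \cup R$ and submodularity of $f$ applied to $A \cup R \subseteq B \cup R$ with element $i$ gives the desired inequality for $g$). Either presentation fits comfortably in a few lines, so I would include just the union/intersection version in the final write-up.
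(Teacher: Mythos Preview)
Your proposal is correct and essentially identical to the paper's proof: both use the union/intersection form of submodularity applied to $A\cup R$ and $B\cup R$ together with the identity $(A\cup R)\cap(B\cup R)=(A\cap B)\cup R$, and the same one-line monotonicity argument.
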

\begin{proof}~

	\begin{enumerate}
		\item Assume $f$ is submodular. Let $S, T \subseteq I$. Then,
		\begin{align*}g(S) +g(T) &= f(R\cup S) +f(R\cup T) \\
		&\geq f \left( (R\cup S )\cup (R \cup T)\right) + f \left( (R\cup S )\cap (R \cup T)\right) \\
		&= f \left( R\cup (S \cup T)\right) + f \left( R\cup (S \cap T)\right)\\
		&=g(S\cup T)+ g(S\cap T )
		\end{align*}
		Hence, $g$ is submodular. 
		\item Assume $f$ is monotone, and let $S\subseteq T \subseteq I$. Then $R\cup S \subseteq R\cup T$, and therefore
		$$g(S) =f(R\cup S )\leq f(R\cup T) = g(T).$$
		Thus, $g$ is monotone.
	\end{enumerate}\qed
\end{proof}

%
%
\begin{claim}\label{monotone-submodular}
	Let $f:2^I\rightarrow\mathbb{R}_{\geq 0}$ be  monotone and submodular function, then for any
	$T_1\subseteq T_2\subseteq I$ and $A\subseteq I$, it holds that
	$f(T_1\cup A)-f(T_1)\geq f(T_2\cup A)-f(T_2)$. 
\end{claim}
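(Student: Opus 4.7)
The plan is to derive this strengthened ``diminishing returns on sets'' inequality directly from the pairwise (union/intersection) form of submodularity, combined with monotonicity. Recall from the footnote in the introduction that submodularity is equivalent to $f(X)+f(Y) \geq f(X\cup Y)+f(X\cap Y)$ for all $X,Y \subseteq I$, so I would instantiate this inequality at the two sets that produce exactly the quantities appearing on the left-hand side.

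Specifically, I would set $X = T_1\cup A$ and $Y = T_2$. Then, using $T_1\subseteq T_2$, one computes $X\cup Y = T_2 \cup A$ and $X\cap Y = (T_1\cup A)\cap T_2 = T_1 \cup (A\cap T_2)$. Substituting into the pairwise submodularity inequality yields
\[
f(T_1\cup A) + f(T_2) \;\geq\; f(T_2 \cup A) + f\bigl(T_1\cup (A\cap T_2)\bigr).
\]
Since $T_1 \subseteq T_1 \cup (A\cap T_2)$, monotonicity gives $f(T_1\cup(A\cap T_2)) \geq f(T_1)$. Combining these two inequalities and rearranging produces exactly $f(T_1\cup A)-f(T_1)\geq f(T_2\cup A)-f(T_2)$.

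There is no real obstacle here: the statement is standard and follows in three lines from the union/intersection form of submodularity plus one application of monotonicity. The only subtlety worth flagging explicitly is that both hypotheses (submodular and monotone) are used; the identity $X\cap Y = T_1\cup(A\cap T_2)$ would leave a residual term $f(T_1\cup(A\cap T_2))$ that is larger than $f(T_1)$ in general, and monotonicity is precisely what lets us replace it by $f(T_1)$ to obtain the clean inequality. No induction on $|A\setminus T_1|$ is needed, though an alternative proof adding the elements of $A\setminus T_1$ one at a time and invoking the elementwise marginal-decrease property would also work.
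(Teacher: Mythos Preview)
Your proof is correct and matches the paper's argument essentially line for line: apply the union/intersection form of submodularity to the pair $(T_1\cup A,\,T_2)$, then use monotonicity and $T_1\subseteq (T_1\cup A)\cap T_2$ to replace the intersection term by $f(T_1)$ and rearrange. The only cosmetic difference is that you compute $(T_1\cup A)\cap T_2 = T_1\cup(A\cap T_2)$ explicitly, whereas the paper leaves it as $(T_1\cup A)\cap T_2$ and notes directly that it contains $T_1$.
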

\begin{proof}
	By the submodularity of $f$, we have
	\begin{equation}
	\label{eq:f_submod}
	f(T_1\cup A) +f(T_2) \geq f(T_1\cup A \cup T_2) + f( (T_1 \cup A) \cap T_2) \geq
	f(T_2\cup A) +f(T_1),
	\end{equation}
	where the second inequality follows from $T_1\subseteq (T_1\cup A) \cap T_2$ and the monotonicity of $f$. By rearranging the terms in (\ref{eq:f_submod}), we have
	$$f(T_1\cup A) - f(T_1) \geq f(T_2 \cup A) - f(T_2)$$
	as required.
	\qed
\end{proof}

\begin{claim}
	\label{submodular-residual}
	Let $f:2^I \rightarrow\mathbb{R}_{\geq 0}$ 
	be a non-negative, monotone and submodular function, and let $S\subseteq I$. Then, $f_S$ is a submodular, monotone and non-negative function.
\end{claim}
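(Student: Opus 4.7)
The plan is to verify the three properties of $f_S$ directly from the definition $f_S(A) = f(S \cup A) - f(S)$, using the corresponding properties of $f$ together with the elementary set identities $(S\cup A)\cup(S\cup B) = S\cup(A\cup B)$ and $(S\cup A)\cap(S\cup B) = S\cup(A\cap B)$.

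First I would handle non-negativity and monotonicity, which are essentially immediate. Non-negativity follows because $S \subseteq S \cup A$, so by monotonicity of $f$ we get $f(S\cup A) \geq f(S)$, hence $f_S(A)\geq 0$. For monotonicity, given $A\subseteq B$ we have $S\cup A\subseteq S\cup B$, and applying monotonicity of $f$ again yields $f_S(A)\leq f_S(B)$.

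The main (still routine) step is submodularity. For arbitrary $A,B\subseteq I$, I would apply the submodularity of $f$ to the pair $S\cup A$ and $S\cup B$:
\[
f(S\cup A) + f(S\cup B) \;\geq\; f\bigl((S\cup A)\cup(S\cup B)\bigr) + f\bigl((S\cup A)\cap(S\cup B)\bigr).
\]
Using the two set identities noted above, the right-hand side equals $f(S\cup(A\cup B)) + f(S\cup(A\cap B))$. Subtracting $2f(S)$ from both sides then gives
\[
f_S(A) + f_S(B) \;\geq\; f_S(A\cup B) + f_S(A\cap B),
\]
which is the submodularity of $f_S$.

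There is no real obstacle here; the only care needed is to use the correct intersection identity $(S\cup A)\cap(S\cup B) = S\cup(A\cap B)$ rather than writing $S\cup A\cap B$, and to avoid double-counting the $-f(S)$ terms when subtracting. The structure of the argument parallels (and is slightly simpler than) the proof of Claim~\ref{lem:submodular_cup}, which treats the related function $g(S)=f(S\cup R)$ without the $-f(S)$ offset.
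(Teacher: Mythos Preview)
Your proof is correct and is essentially the same argument as the paper's. The only organizational difference is that the paper invokes Claim~\ref{lem:submodular_cup} to conclude that $g(A)=f(S\cup A)$ is monotone and submodular, and then observes that $f_S=g-f(S)$ differs from $g$ by a constant; you instead unfold that claim and carry the $-f(S)$ offset through directly, which amounts to the same computation.
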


\begin{proof}
	By Claim \ref{lem:submodular_cup}, it holds that the function $g:2^I\rightarrow \mathbb{R}$, defined by $g(R)=f(S\cup R)$ for any $R\subseteq I$, is monotone and submodular. Since $f_S$ is the difference between $g$ and a constant,  it is  submodular and monotone as well. 
	
	It remains to show $f_S$ is non-negative.   
	As $f$ is monotone, for any $R\subseteq I$ it holds that $f(S\cup R)\geq f(S)$. Thus, $f_S(R)= f(S\cup R)-f(S)\geq 0$. \qed
	
%
%
%
\end{proof}

\newproof{psub}{Proof of Claim \ref{submodular-extension} }
\begin{psub}
It is easy to see that $g$ is non-negative, as $f$ is non negative. In addition, for any two subsets $A\subseteq B\subseteq E$, we have $ \cup_{(S,h)\in A}S\subseteq \cup_{(S,h)\in B}S$. Thus, since $f$ is monotone, $g$ is monotone as well.

We now show that $g$ is submodular. Consider subsets $A\subseteq B\subseteq E$ and $(S,h)\in E\setminus B$.
\begin{align*}
g(A\cup\{(S,h)\})-g(A) &= f(\cup_{(S',h')\in A} S'\cup S) - f(\cup_{(S',h')\in A}S
') \\
&\leq f(\cup_{(S',h')\in B}S'\cup S) - f(\cup_{(S',h')\in B} S') \\
&=g(B\cup\{(S,h)\})-g(B).
\end{align*}
The inequality follows from Claim \ref{monotone-submodular} and $\cup_{(S',h')\in A} S' \subseteq \cup_{(S',h')\in B} S'$.
\qed
\end{psub}


\label{app:omitted}




\newproof{psimp_evict}{Proof of Lemma \ref{lem:simple_evict}}
\begin{psimp_evict}
	We can write
	\begin{equation*}
	\begin{aligned}
	h(S_1 &\cup \ldots \cup S_N)  -h(\emptyset)= \sum_{r=1}^N
	\left( h(S_1\cup \ldots \cup S_r) -h(S_1 \cup \ldots \cup S_{r-1})\right).
	\end{aligned}
	\end{equation*}
	Therefore, there is $1\leq r^* \leq N$ such that 
	\begin{equation*}
	h(S_1\cup \ldots \cup S_{r^*} )- 
	h(S_1\cup \ldots \cup S_{r^*-1} )\leq \frac{1}{N} \left(h(S_1 \cup \ldots \cup S_N)  -h(\emptyset) \right).
	\end{equation*}
	Thus,
	\begin{equation*}
	\begin{aligned}
	\left( 1- \frac{1}{N} \right)  &\left( h(S_1 \cup \ldots \cup S_N)  -h(\emptyset)\right)
	\\
	&\leq h(S_1 \cup \ldots \cup S_N) - h(\emptyset) - h(S_1 \cup \ldots \cup S_{r^*}) + h(S_1 \cup \ldots \cup S_{r^*-1})  \\
	&=h\left( \left( S_1 \cup \ldots \cup S_N\setminus S_{r^*} \right) \cup \left( S_1 \cup \ldots \cup S_{r^*}\right)\right) \\
	&~~~~+ h\left( \left( S_1 \cup \ldots \cup S_N\setminus S_{r^*} \right) \cap \left( S_1 \cup \ldots \cup S_{r^*}\right)\right)\\
	&~~~~-h(S_1 \cup \ldots \cup S_{r^*}) 
	-h(\emptyset)\\
	&\leq h\left( S_1 \cup \ldots \cup S_N\setminus S_{r^*} \right)
	+h\left( S_1 \cup \ldots S_{r^*}\right)
	-h(S_1 \cup \ldots \cup S_{r^*}) 
	-h(\emptyset)\\
	&=h\left( \bigcup_{1\leq r \leq N, ~r\neq r^*} S_r \right) -h(\emptyset),
	\end{aligned}
	\end{equation*}
	where the second inequality follows from the submodularity of $h$. The first and last equalities  use the property that $S_1,\ldots, S_N$ are disjoint. By rearranging the terms, we have
	$$
	h\left( \bigcup_{1\leq r \leq N, ~r\neq r^*} S_r \right) \geq 
\left( 1- \frac{1}{N} \right)  h(S_1 \cup \ldots \cup S_N)  +\frac{1}{N}\cdot h(\emptyset) \geq  
	\left( 1- \frac{1}{N} \right)  h(S_1 \cup \ldots \cup S_N). 
	$$
	\qed
\end{psimp_evict}

\section{Chernoff's Bound}

In the analysis of Algorithm~\ref{alg:rounding} we use the following Chernoff-like bounds.
\begin{lemma}[Theorem 3.1 in \cite{gandhi2006dependent}]
	\label{lem:chernoff}
	Let $X= \sum_{i=1}^{n} X_i \cdot \lambda_i$ where $(X_i)_{i=1}^{n}$ is a sequence of independent Bernoulli random variable and $\lambda_i \in [0,1]$ for $1\leq i \leq n$. Then
	for any $\eps \in (0,1)$ and $\eta\geq \E[X]$ it holds that
	$$\Pr\left(X > (1+\eps) \eta \right)< \exp\left(- \frac{\eps^2}{3} \eta\right)$$
\end{lemma}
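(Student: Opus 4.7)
The plan is to follow the standard moment-generating-function (Chernoff) argument, adapted to weighted sums with $\lambda_i \in [0,1]$ and to the slack $\eta \geq \E[X]$.

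Write $p_i = \Pr(X_i = 1)$ and $\mu = \E[X] = \sum_{i=1}^n p_i \lambda_i$. By assumption $\mu \leq \eta$. For any $t > 0$, Markov's inequality applied to $e^{tX}$ gives
\[
\Pr\bigl(X > (1+\eps)\eta\bigr) \;=\; \Pr\bigl(e^{tX} > e^{t(1+\eps)\eta}\bigr) \;\leq\; e^{-t(1+\eps)\eta}\,\E[e^{tX}].
\]
By independence, $\E[e^{tX}] = \prod_{i=1}^n \E[e^{t\lambda_i X_i}] = \prod_{i=1}^n \bigl(1 - p_i + p_i e^{t\lambda_i}\bigr)$.

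Next I would bound each factor using convexity: since $x \mapsto e^{tx}$ is convex, for $\lambda_i \in [0,1]$ and $t \geq 0$ we have $e^{t\lambda_i} \leq 1 + \lambda_i(e^t - 1)$, hence
\[
\E[e^{t\lambda_i X_i}] \;\leq\; 1 + p_i \lambda_i(e^t - 1) \;\leq\; \exp\bigl(p_i \lambda_i (e^t - 1)\bigr),
\]
using $1+x \leq e^x$. Taking the product and then using $\mu \leq \eta$ together with $e^t - 1 > 0$,
\[
\E[e^{tX}] \;\leq\; \exp\bigl((e^t - 1)\mu\bigr) \;\leq\; \exp\bigl((e^t - 1)\eta\bigr).
\]
Setting $t = \ln(1+\eps) > 0$ (the standard optimizing choice) yields
\[
\Pr\bigl(X > (1+\eps)\eta\bigr) \;\leq\; \exp\Bigl(\eta\bigl(\eps - (1+\eps)\ln(1+\eps)\bigr)\Bigr).
\]

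The remaining — and only nontrivial — step is the elementary inequality
\[
(1+\eps)\ln(1+\eps) - \eps \;\geq\; \frac{\eps^2}{3} \qquad \text{for all } \eps \in (0,1),
\]
which is where the constant $3$ comes from. I would prove it by letting $h(\eps) = (1+\eps)\ln(1+\eps) - \eps - \eps^2/3$ and checking $h(0)=0$, $h'(\eps) = \ln(1+\eps) - 2\eps/3$, $h'(0)=0$, and $h''(\eps) = \tfrac{1}{1+\eps} - \tfrac{2}{3}$, which is positive on $[0,\tfrac12)$ and negative on $(\tfrac12, 1]$. Hence $h'$ is unimodal on $[0,1]$ with $h'(0) = 0$ and $h'(1) = \ln 2 - 2/3 > 0$, so $h'\geq 0$ on $[0,1]$ and therefore $h\geq 0$ on $[0,1]$. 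Combining this with the displayed bound gives exactly $\Pr(X > (1+\eps)\eta) < \exp(-\eps^2 \eta/3)$, completing the proof. The strict inequality follows from the strict inequality $1+x < e^x$ for $x \neq 0$ in the MGF bound (any $i$ with $p_i\lambda_i>0$ suffices; the degenerate case $\mu=0$ is trivial).

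The main conceptual step is the MGF/Markov setup together with the convexity bound $e^{t\lambda} \leq 1 + \lambda(e^t - 1)$ that handles the weights $\lambda_i$; the main technical step is the scalar inequality $(1+\eps)\ln(1+\eps) - \eps \geq \eps^2/3$ on $(0,1)$. Everything else is bookkeeping.
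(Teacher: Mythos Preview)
Your argument is correct and is the standard moment-generating-function proof of the weighted Chernoff bound; each step (the convexity bound $e^{t\lambda}\le 1+\lambda(e^t-1)$ for $\lambda\in[0,1]$, the slack replacement $\mu\le\eta$, the choice $t=\ln(1+\eps)$, and the calculus verification of $(1+\eps)\ln(1+\eps)-\eps\ge\eps^2/3$ on $(0,1)$) checks out. Note, however, that the paper does not give its own proof of this lemma: it is stated in the appendix as a quotation of Theorem~3.1 from \cite{gandhi2006dependent} and used as a black box, so there is nothing in the paper to compare your proof against.
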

\begin{lemma}[Theorem 1.3 in \cite{chekuri2009dependent}]
	\label{lem:extension_cher}
	Let $I=\{1,\ldots,n\}$, $\upsilon>0$ and  $f:2^I \rightarrow \mathbb{R}_+$ be a monotone submodular function such that $f(\{i\})-f(\emptyset) \leq \upsilon$ for any $i\in I$. Let $X_1,\ldots, X_n$ be independent random variables and $\eta= \E[f(\{i\in I| X_i=1\})]$. Then for  any $\eps >0$ it holds that
	$$\E[f(\{i\in I| X_i =1\}) \leq (1-\eps) \eta] \leq \exp\left(-\frac{\eta \cdot \eps^2}{2\upsilon}\right)$$

\end{lemma}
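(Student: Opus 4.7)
My plan is to prove this submodular lower-tail concentration bound by the classical Chernoff / exponential-moment method, where the monotone submodular structure of $f$ enters through a sharp contraction of the moment generating function of $f(R)$, with $R = \{i : X_i = 1\}$ and $\eta = \E[f(R)]$ (I read the outer $\E$ in the displayed inequality as $\Pr$, the usual formulation of such a concentration bound). A first step is normalization: replace $f$ by $(f-f(\emptyset))/\upsilon$, which is still monotone submodular with single-element marginals at most $1$, and reduces the target to $\Pr[f(R)\le(1-\eps)\eta]\le\exp(-\eps^2\eta/2)$.

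The core ingredient is the MGF bound
$$\E\!\left[e^{-\lambda f(R)}\right]\;\le\;\exp\!\big(-(1-e^{-\lambda})\,\eta\big)\qquad \text{for every } \lambda\ge 0,$$
which I would prove by induction on $n$. Condition on $X_n$: on $\{X_n=0\}$ we reduce to the same statement for $f$ restricted to $[n-1]$ with mean $\eta_0$; on $\{X_n=1\}$ we get $e^{-\lambda f(\{n\})}$ times the same statement for the residual function $f_{\{n\}}$ on $[n-1]$, which is still monotone submodular with marginals bounded by $1$ (Claim~\ref{submodular-residual}), with mean $\eta_1=\eta_0+g-f(\{n\})$, where $g=\E[f_{R'}(\{n\})]$ and $R'=R\cap[n-1]$. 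Submodularity gives $g\le f(\{n\})$, so $\eta_1\le\eta_0$, and the overall mean is $\eta=\eta_0+p_n g$ with $p_n=\Pr[X_n=1]$. Assembling the two branches, the inductive claim reduces to a scalar inequality that follows from $1+x\le e^x$, the convexity bound $e^{-\lambda a}\le 1-a(1-e^{-\lambda})$ for $a\in[0,1]$, and monotonicity of $a\mapsto \log(1-a(1-e^{-\lambda}))+(1-e^{-\lambda})(a-g)$ in $a\ge g$.

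Given the MGF bound, the tail inequality is routine: by Markov's inequality, $\Pr[f(R)\le(1-\eps)\eta]\le e^{\lambda(1-\eps)\eta}\,\E[e^{-\lambda f(R)}]\le\exp\!\big(\eta\,[\lambda(1-\eps)-(1-e^{-\lambda})]\big)$. Optimizing at $\lambda=-\ln(1-\eps)$ yields $\exp(-\eta[(1-\eps)\ln(1-\eps)+\eps])$, and the standard inequality $(1-\eps)\ln(1-\eps)+\eps\ge\eps^2/2$ produces $\exp(-\eps^2\eta/2)$. Undoing the normalization by restoring $\upsilon$ recovers the stated $\exp(-\eps^2\eta/(2\upsilon))$.

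The main obstacle is the inductive MGF step: submodularity must be used precisely enough that the mean parameter is preserved exactly across the two conditionings, and the final scalar check must not leak a constant. A cleaner alternative I would fall back on is to move to the continuous relaxation: define $\phi(t)=\E[e^{-\lambda f(R(t\bar p))}]$ where $R(\bar x)$ samples each $i$ independently with probability $\bar x_i$, use the concavity of the multilinear extension along positive directions (a consequence of submodularity) to derive a differential inequality $\phi'(t)\ge-\lambda\,(\mathrm{rate}(t))\,\phi(t)$, and integrate from $0$ to $1$ to recover the same MGF bound. Both routes are well known, and the authors' choice to cite Chekuri--Vondr\'ak--Zenklusen is justified by the fact that, while elementary, the argument is technical.
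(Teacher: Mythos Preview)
The paper does not prove this lemma; it is simply quoted from \cite{chekuri2009dependent} (and yes, the outer $\E$ in the display is a typo for $\Pr$). So there is no ``paper's own proof'' to compare against. Your sketch is essentially the standard argument behind the cited result: normalize, establish the moment generating function bound $\E[e^{-\lambda f(R)}]\le\exp(-(1-e^{-\lambda})\eta)$ by induction on the ground set, then apply Markov and optimize. The inductive step you outline is correct: writing $\beta=1-e^{-\lambda}$, $a=f(\{n\})$, $g=\E[f_{R'}(\{n\})]\le a$, the required scalar inequality reduces, via $e^{-\lambda a}\le 1-a\beta$ (convexity on $[0,1]$) and the monotonicity you state (indeed $h(a)=\log(1-a\beta)+\beta(a-g)$ has $h'(a)\le 0$), to $(1-p)+p(1-g\beta)\le e^{-p g\beta}$, which is $1-x\le e^{-x}$.

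One small point you glossed over: the normalization $f\mapsto(f-f(\emptyset))/\upsilon$ shifts the mean to $\tilde\eta=(\eta-f(\emptyset))/\upsilon$ and rescales the deviation parameter to $\tilde\eps=\eps\eta/(\eta-f(\emptyset))\ge\eps$, so applying the normalized bound yields $\exp(-\tilde\eps^2\tilde\eta/2)=\exp\!\big(-\eps^2\eta^2/(2\upsilon(\eta-f(\emptyset)))\big)\le\exp(-\eps^2\eta/(2\upsilon))$. This does go the right way, but it is not quite as automatic as ``restoring $\upsilon$'' suggests.
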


\end{document}